\newtheorem{lemma}{Lemma}
\newtheorem{corollary}{Corollary}
\newtheorem{theorem}{Theorem}
\newtheorem{definition}{Definition}
\newcommand{\E}{\mathbb{E}}
\begin{document}
\bibliographystyle{IEEEtran}%IEEEtran
\title{Convolution operations arising from Vandermonde matrices}
\author{
  \IEEEauthorblockN{\O yvind Ryan,~\IEEEmembership{Member,~IEEE} and M{\'e}rouane~Debbah,~\IEEEmembership{Senior Member,~IEEE}\\}
  \thanks{This work was supported by Alcatel-Lucent within the Alcatel-Lucent Chair on flexible radio at SUPELEC}
  \thanks{\O yvind~Ryan is with the Centre of Mathematics for Applications, University of Oslo, P.O. Box 1053 Blindern, NO-0316 Oslo, Norway, oyvindry@ifi.uio.no}
  \thanks{M{\'e}rouane~Debbah is with SUPELEC, Gif-sur-Yvette, France, merouane.debbah@supelec.fr}
}

\markboth{IEEE Transactions on Information Theory}{Shell \MakeLowercase{\textit{et al.}}: Bare
Demo of IEEEtran.cls for Journals}

\maketitle
\begin{abstract}
Different types of convolution operations involving large Vandermonde matrices are considered.
The convolutions parallel those of large Gaussian matrices and additive and multiplicative free convolution.
First additive and multiplicative convolution of Vandermonde matrices and deterministic diagonal matrices are considered.
After this, several cases of additive and multiplicative convolution of two independent Vandermonde matrices are considered.
It is also shown that the convergence of any combination of Vandermonde matrices is almost sure.
We will divide the considered convolutions into two types: those which depend on the phase distribution of the Vandermonde matrices,
and those which depend only on the spectra of the matrices. A general criterion is presented to find which type applies for any given convolution.
A simulation is presented, verifying the results.
Implementations of all considered convolutions are provided and discussed,
together with the challenges in making these implementations efficient.
The implementation is based on the technique of Fourier-Motzkin elimination, and is quite general as it can be applied to virtually any combination of Vandermonde matrices.
Generalizations to related random matrices, such as Toeplitz and Hankel matrices, are also discussed.
\end{abstract}

\begin{keywords}
Vandermonde matrices, Random Matrices, convolution, deconvolution, limiting
eigenvalue distribution.
\end{keywords}

\section{Introduction}
Certain random matrices have in the large dimensional limit a deterministic behavior of the eigenvalue distributions,
meaning that one can compute the eigenvalue distributions of ${\bf A}{\bf B}$ and ${\bf A}+{\bf B}$ based only on the individual eigenvalue distributions
of ${\bf A}$ and ${\bf B}$, when the matrices are independent and large. 
The process of computing theses eigenvalues is called {\bf convolution}, 
or {\bf de-convolution} when one would like to compute the inverse operation. 
Gaussian-like matrices fit into this setting, and the concept which can be used to find the eigenvalue distribution 
from that of the component matrices in this case is called freeness~\cite{book:hiaipetz}.
Free probability
theory~\cite{book:hiaipetz}, which uses the concept of freeness, is not a new tool but has grown into an
entire field of research since the pioneering work of Voiculescu in the 1980's~(\cite{vo2,paper:vomult,vo6,vo7}). However, the basic
definitions of free probability are quite abstract and this has hinged a burden on its actual practical use. 
The original goal was to introduce an analogy to independence in classical probability that can be used for non-commutative random variables like matrices. 
These more general random variables are elements of what is  called a  {\em noncommutative probability space}. 
The convolution/deconvolution techniques used are various. The classical ones are either analytic (using $R$ and $S$ transforms \cite{Berco+Vovo.93,book:hiaipetz}) 
or based on moments~\cite{book:comblect,VDN,fbg.inf.div.rect,Vo.104}.
Recent deconvolution techniques based on statistical eigen-inference
methods using large Wishart matrices~\cite{raomingospeicheredelman}, random
Matrix theory~\cite{elkaroui2} or other  deterministic
equivalents {\it \`a la} Girko~\cite{book.girko98,book:girkogest,paper:hachem07,paper:mestresub}
were proposed and are possible alternatives. Each one has its advantages and
drawbacks.  Unfortunately, although successfully applied \cite{eurecom:freedeconvinftheory,eurecom:channelcapacity},
all these techniques can only treat very simple models i.e. the case where
one of the considered matrices is unitarily invariant. This
invariance has a special meaning in wireless networks and supposes
that there is some kind of symmetry in the problem to be analyzed.  The moments technique, which will be the focus of this work, is very appealing and
powerful in order to derive the exact asymptotic moments of "non-free matrices", for which we still do not have a general framework. It
requires combinatorial skills and can be used for a large class of
random matrices. The main drawback of the technique (compared to other tools such as
the Stieltjes transform method~\cite{paper:doziersilverstein1}) is
that it can rarely provide the exact eigenvalue distribution.
However, in many  applications, one needs only a subset of
the moments depending on the number of parameters to be estimated.

Recently~\cite{ryandebbah:vandermonde1}, Vandermonde matrices (which do not fall within the free probability framework) 
were  shown to be a case of high interest in wireless communications. 
Such matrices have various applications in signal reconstruction \cite{supelec:estimation}, 
cognitive radio \cite{paper:sampaiokobayashi2}, physical layer security \cite{Kobayashidebbah}, 
and MIMO channel modeling \cite{newfrommerouane2}. A Vandermonde matrix with entries on the unit circle is on the form
\begin{equation} \label{vandermonde}
  {\bf V} = \frac{1}{\sqrt{N}}
               \left( \begin{array}{lll}
                        1                    & \cdots & 1 \\
                        e^{-j \omega_1}      & \cdots & e^{-j \omega_L} \\
                        \vdots               & \ddots & \vdots \\
                        e^{-j (N-1)\omega_1} & \cdots & e^{-j (N-1)\omega_L}
                      \end{array}
               \right)
\end{equation}
${\bf V}$ will in this paper always denote a Vandermonde matrix, and its dimension will be denoted $N\times L$.
The $\omega_1$,...,$\omega_L$, also called phase distributions, will be assumed i.i.d., taking values in $[ 0,2\pi )$.
We will also assume, as in many applications, that $N$ and $L$ go to infinity at the same rate, and write $c=\lim_{N\rightarrow\infty}\frac{L}{N}$ for the aspect ratio.
If necessary, we will write ${\bf V}_{\omega}$ to emphasize the actual phase distribution,
or ${\bf V}_{\omega,c}$ to also emphasize the aspect ratio.
In~\cite{ryandebbah:vandermonde1}, the limit eigenvalue distributions of combinations of ${\bf V}^H{\bf V}$
and diagonal matrices ${\bf D}(N)$ were shown to be dependent on only the
limit eigenvalue distributions of the two matrices. Important combinations are the multiplicative and additive models,
\begin{equation} \label{firsttypes}
  {\bf D}(N){\bf V}^H{\bf V} \mbox{ and } {\bf D}(N) + {\bf V}^H{\bf V}.
\end{equation}
In the large $N$-limit, (\ref{firsttypes}) thus gives rise to two convolution operations,
\begin{description}
  \item[1)] $\lim_{N\rightarrow\infty} {\bf D}(N){\bf V}^H{\bf V}$ and $\lim_{N\rightarrow\infty} ({\bf D}(N)+{\bf V}^H{\bf V})$,
\end{description}
which thus depend only on the input spectra.
Here $\lim$ is used to denote the limit of the eigenvalue distribution of the considered matrix, in an appropriate metric.
However, it is not clear from~\cite{ryandebbah:vandermonde1} how 1) can be computed algorithmically, as only sketches for this were provided.
We also have the operations
\begin{description}
  \item[2)] $\lim_{N\rightarrow\infty} {\bf D}(N){\bf V}{\bf V}^H$ and $\lim_{N\rightarrow\infty} ({\bf D}(N) + {\bf V}{\bf V}^H)$,
\end{description}
for which it is unknown whether the result only depends on the spectra. 
This case happens in practical scenarios (for cognitive applications~\cite{paper:sampaiokobayashi2} as well as secure transmissions~\cite{Kobayashidebbah}) 
when a Vandermonde precoder ${\bf V}$ is used in a given Toeplitz channel matrix ${\bf D}(N)$ independent from ${\bf V}$. 
One can then compute cognitive and secrecy rates. When we replace with independent Vandermonde matrices ${\bf V}_1$ and ${\bf V}_2$ 
which may or may not have the same phase distributions,
it is also unknown if the convolution operations
\begin{description}
  \item[3)] $\lim_{N\rightarrow\infty} {\bf V}_1^H{\bf V}_1{\bf V}_2^H{\bf V}_2$ and $\lim_{N\rightarrow\infty} ({\bf V}_1^H{\bf V}_1 + {\bf V}_2^H{\bf V}_2)$,
  \item[4)] $\lim_{N\rightarrow\infty} {\bf V}_1{\bf V}_1^H{\bf V}_2{\bf V}_2^H$ and $\lim_{N\rightarrow\infty} ({\bf V}_1{\bf V}_1^H + {\bf V}_2{\bf V}_2^H)$,
\end{description}
only depend on the spectra of ${\bf V}_1$ and ${\bf V}_2$. These cases are important  for the recovery of the distribution of sensors (which are deployed in a clustered manner with different mean positions) and in the case of MIMO multi-fold scattering \cite{mullermodelcomm}.

Expressions such as 4), when different types of matrices are multiplied,
will in the following be called {\em mixed moments}.

In this contribution we explain which of the above operations depend only on the spectra of the matrices,
state expressions for those convolutions (in fact, we also state expressions for the cases where the result can not be written in terms of the spectra),
explain how these expressions have been obtained algorithmically,
and explain an accompanying software implementation~\cite{eurecom:vandermondeimpl2,rmtdoc} of the corresponding algorithms.
We also attempt to complete the analysis started in~\cite{ryandebbah:vandermonde1},
by stating a very general criterion for when the mixed moments of (many) Vandermonde matrices and deterministic matrices depend only the input spectra:

{\em If there are no terms on the form ${\bf V}_1^H{\bf V}_2$ in a mixed moment,
with ${\bf V}_1$ and ${\bf V}_2$ independent and with different phase distributions,
the mixed moment will depend only on the spectra of the input matrices.
In all other cases, we can't expect dependence on just the spectra of the input matrices,
and the mixed moment can depend on the entire phase distributions of the input matrices.}

The software implementation can in fact be extended to handle all cases which meet this criterion,
as well as cases where knowledge of the phase distribution also is required.
In this way it is an indispensable tool, as it automates the very tedious computations inherent in the presented formulas,
for which no simple expressions are known.

Concluding from the criterion, 1) will depend only on the spectra (as shown in~\cite{ryandebbah:vandermonde1}), as does 3).
4) may not depend on only the spectra when the two phase distributions are different.
Despite this, 4) is interesting in its own right, since it has a geometric interpretation in terms of phase distributions,
and is therefore handled separately.
For case 2), we state more generally that when the pattern ${\bf D}(N){\bf V}$ appears in a mixed moment, we can not expect dependence only on the spectrum.

It turns out that other types of random matrices can use the same methods as for Vandermonde matrices to compute their moments,
such as Toeplitz matrices and Hankel matrices. We will explain how the software implementation has been extended to handled these matrices as well.

The paper is organized as follows.
Section~\ref{section:essentials} provides background essentials on
random matrix theory needed for the main results, which are stated in~\ref{section:theorems}.
The results include the precise statement of the criterion above for when we only have dependence on the spectra of the matrices,
results on the convolution operations 1)-4),
and extensions to related random matrices such as Toeplitz and Hankel matrices.
A generalization of our results to almost sure convergence of matrices is also made.
All presented formulas are obtained from the implementation, and the major pieces in this implementation are gone through in Section~\ref{theoremimpl},
such as partition iteration, and Fourier-Motzkin elimination~\cite{paper:dahl1}.
Section~\ref{simulations} presents a simulation which verifies the results.

\section{Random matrix Background Essentials} \label{section:essentials}
In the following, upper (lower boldface) symbols will be used for
matrices (column vectors), whereas lower symbols will represent
scalar values, $(.)^T$ will denote the transpose operator, $(.)^\star$
conjugation, and $(.)^H=\left((.)^T\right)^\star$ hermitian
transpose. ${\bf I}_L$ will represent the $L\times L$ identity matrix.
We let $\mathrm{Tr}$ be the (non-normalized) trace for square matrices, defined by,
\[
  \mathrm{Tr}({\bf A}) = \sum_{i=1}^L a_{ii},
\]
where $a_{ii}$ are the diagonal elements of the $L\times L$ matrix ${\bf A}$.
We also let $\mathrm{tr}$ be the normalized trace, defined by $\mathrm{tr}({\bf A}) = \frac{1}{L}\mathrm{Tr}({\bf A})$.

In the following we will implicitly assume that $L$ and $N$ go to infinity in such a way that $\frac{L}{N}\rightarrow c$.
${\bf D}_r(N), 1\leq r\leq n$ will denote non-random diagonal $L\times L$ matrices.
We will have use for the following definition:
\begin{definition} \label{ddef}
  We will say that the $\{ {\bf D}_r(N) \}_{1\leq r\leq n}$ have a joint limit distribution as $N\rightarrow\infty$ if the limit
  \begin{equation} \label{alphadef}
    D_{i_1,...,i_s} = \lim_{N\rightarrow\infty} \mathrm{tr}\left( {\bf D}_{i_1}(N)\cdots {\bf D}_{i_s}(N)\right)
  \end{equation}
  exists for all choices of $i_1,...,i_s\in \{ 1,..,n\}$.
\end{definition}

A joint limit distribution for the ${\bf D}_r(N)$ will always be assumed in the following.
The corresponding concept for random matrices is the following:

\begin{definition}
Let $\{{\bf A}_n\}_{n=1}^{\infty}$ be an ensemble of (square) random matrices.
We say that $\{{\bf A}_n\}_{n=1}^{\infty}$ converge in distribution if the limit
\begin{equation} \label{moment}
  \lim_{n\rightarrow\infty} \E[\mathrm{tr}(({\bf A}_n)^r)]
\end{equation}
exists for all $r$.
We will say that ensembles $\{ {\bf A}_{1n},{\bf A}_{2n},...\}_{n=1}^{\infty}$ of random matrices converge in distribution if the limit
\begin{equation} \label{mixedmoment}
  \lim_{n\rightarrow\infty} \E[\mathrm{tr}({\bf A}_{i_1n}{\bf A}_{i_2n}\cdots {\bf A}_{i_sn})]
\end{equation}
exists whenever the matrix product ${\bf A}_{i_1n}{\bf A}_{i_2n}\cdots {\bf A}_{i_sn}$ is well-defined, and square.
\end{definition}

When we refer to moments, we will generally mean (\ref{moment}), while mixed moments refer to (\ref{mixedmoment}).
A stronger form of convergence, which we will generalize our results to, is {\em almost sure convergence in distribution}.
This type of convergence requires that (\ref{moment}), (\ref{mixedmoment}) are replaced with
\begin{eqnarray*}
  & & \mathrm{tr}\left(\left({\bf A}_n\right)^r\right) \stackrel{\mbox{a.s.}}{\rightarrow} C_r \\
  & & \mathrm{tr}({\bf A}_{i_1n}{\bf A}_{i_2n}\cdots {\bf A}_{i_sn}) \stackrel{\mbox{a.s.}}{\rightarrow} C_{i_1,...,i_s},
\end{eqnarray*}
where $C_r,C_{i_1,...,i_s}$ are constants.

We will also need some basic concepts from partition theory.
${\cal P}(n)$ will denote the partitions of $\{1,...,n\}$.
For a partition $\rho=\{ W_1,...,W_r\}\in{\cal P}(n)$, $W_1,...,W_r$ denote its blocks, while $|\rho|=r$ denotes the number of blocks,
$\|\rho\|=n$ the number of elements in the partition.
We will write $k\sim_{\rho}l$ when $k$ and $l$ belong to the same block of $\rho$.
We will also write $b(i)$ for the index of the block in $\rho$ $i$ belongs to.
Partition notation is adapted to the mixed moment (\ref{alphadef}) in the following way:
\begin{definition} \label{ddef2}
  For $\rho = \{ W_1,...,W_k \}$, with $W_i = \{ w_{i1},...,w_{i|W_i|} \}$,
  we define
  \begin{eqnarray}
    D_{W_i}  &=& D_{i_{w_{i1}},...,i_{w_{i|W_i|}}} \label{dblockdef} \\
    D_{\rho} &=& \prod_{i=1}^k D_{W_i}. \label{dpartdef}
  \end{eqnarray}
\end{definition}

The set of partitions is a partially ordered set under the refinement order, i.e. $\rho_1\leq\rho_2$ whenever any block of $\rho_1$ is contained within a
block of $\rho_2$. By $\rho_1\vee\rho_2$ we will mean the smallest partition (w.r.t. the refinement order) which is larger than both $\rho_1$ and $\rho_2$.
$\vee$ will in our results be used in conjunction with the partition $[0,1]_n\in{\cal P}(2n)$, defined by
\[ [0,1]_n=\{\{1,2\},\{3,4\},...,\{2n-1,2n\}\}.\]
$[0,1]_n$ is an example of what is called an interval partition, meaning that each block consists solely of successive numbers.
We will also write $[\cdot,\cdot]$ for the intervals in an interval partition, so that we could also have written
\[ [0,1]_n=\{[1,2],[3,4],...,[2n-1,2n]\}.\]

We will in the following consider the trace of a general mixed moment of Vandermonde matrices and deterministic matrices,
the only requirement being that matrices and their adjoints appear in alternating order so that the resulting matrix is square:
\begin{equation} \label{moregeneral}
  \mathrm{tr}\left({\bf D}_1(N){\bf V}_{i_1}^H {\bf V}_{i_2} \cdots {\bf D}_n(N){\bf V}_{i_{2n-1}}^H{\bf V}_{i_{2n}}\right),
\end{equation}
where ${\bf V}_1,{\bf V}_2,...$ are assumed independent and with phase distributions $\omega_1,\omega_2,...$.
In particular, we assume that $N_{i_{2k}}=N_{i_{2k-1}}$ when the ${\bf V}_i$ are $N_i\times L$,
in order for the dimensions of the matrices in (\ref{moregeneral}) to match.
It turns out we can obtain the asymptotic behavior of (\ref{moregeneral}) for arbitrary continuous phase distributions $\omega_i$.
For (\ref{moregeneral}) we will let $\sigma$ be the partition in ${\cal P}(2n)$ defined by equality of the phase distributions, i.e.
$j\sim_{\sigma}k$ if and only if $\omega_{i_j}=\omega_{i_k}$ ($i_j$ and $i_k$ may or may not be different for this).
Similarly we will let $\sigma_1$ be the partition in ${\cal P}(2n)$ defined by dependence of the Vandermonde matrices, i.e.
$j\sim_{\sigma_1}k$ if and only if $i_j=i_k$.
Obviously, $\sigma_1\leq\sigma$.

\section{Statement of main results} \label{section:theorems}
The main result of the paper addresses moments on the form (\ref{moregeneral}), and goes as follows.

\begin{theorem} \label{deconv}
Let ${\bf V}_i$ be independent $N_i\times L$ Vandermonde matrices with aspect ratios $c_i=\lim_{N_i\rightarrow\infty} \frac{L}{N_i}$
and phase distributions $\omega_i$ with continuous densities on $[0,2\pi)$.
The mixed moment
\begin{equation} \label{moregeneral2}
  \lim_{N\rightarrow\infty} \mathrm{tr}\left( {\bf D}_1(N){\bf V}_{i_1}^H {\bf V}_{i_2} \cdots {\bf D}_n(N){\bf V}_{i_{2n-1}}^H{\bf V}_{i_{2n}} \right).
\end{equation}
always exists when ${\bf D}_i(N)$ have a joint limit distribution.
When $\sigma\geq [0,1]_n$ (i.e. there are no terms on the form ${\bf V}_i^H{\bf V}_j$,
with ${\bf V}_i$ and ${\bf V}_j$ independent and with different phase distributions), (\ref{moregeneral2}) depends only on the moments
\begin{eqnarray*}
  V_n^{(i)}       &=& \lim_{N\rightarrow\infty} E\left[ \mathrm{tr}\left( \left( {\bf V}_i^H{\bf V}_i \right)^n \right)\right] \\
  D_{i_1,...,i_s} &=& \lim_{N\rightarrow\infty} \mathrm{tr}\left( {\bf D}_{i_1}(N)\cdots {\bf D}_{i_s}(N)\right),
\end{eqnarray*}
the aspect ratios $c_i$, and $\sigma$, and assumes the form
\begin{equation} \label{mostgeneral}
  \sum_{s,r,i_t,j_t,k_t} a_{i_1,...,i_s,j_1,...,j_r,k_1,...,k_r} D_{i_1,...,i_s} \prod_{t=1}^r V_{j_t}^{(k_t)},
\end{equation}
where the $a_{i_1,...,i_s,j_1,...,j_r,k_1,...,k_r}$ are rational numbers.
\end{theorem}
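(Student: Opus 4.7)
The plan is to follow the combinatorial/moment method initiated in~\cite{ryandebbah:vandermonde1}. First I would expand the trace in (\ref{moregeneral2}) into a multi-index sum using the explicit entries of the Vandermonde matrices: writing $({\bf V}_i)_{r,k}=\frac{1}{\sqrt{N_i}}e^{-j(r-1)\omega_{i,k}}$, the trace becomes a normalized sum over column indices $k_1,\ldots,k_{2n}$ (cyclically identified through the diagonal ${\bf D}$-factors and the outer trace) and row indices $r_1,\ldots,r_n$, one per ${\bf V}^H{\bf V}$ block. Each term carries a product of diagonal entries of the form $({\bf D}_p(N))_{k_{2p-2},k_{2p-2}}$ and exponential factors $e^{j(r_p-1)(\omega_{i_{2p-1},k_{2p-1}}-\omega_{i_{2p},k_{2p}})}$, divided by $\prod_p N_{i_{2p}}$.

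Next, since the ${\bf V}_i$ are independent and the phases within each ${\bf V}_i$ are i.i.d., taking the expectation and then organising the column sum according to a partition $\rho\in{\cal P}(2n)$ that records equalities among the column indices reduces the expression to a sum indexed by $\rho$. For each such $\rho$, the expectation over phases produces a product, over the blocks of $\rho$ refined by $\sigma_1$, of integrals of exponentials against the continuous densities of the $\omega_i$'s. The remaining row sum, once normalised by $\prod N_{i_{2p}}$, converges as a Riemann sum to an integral over a suitable product domain in $[0,1]^n$, with integrand determined by $\rho$. Combined with the asymptotic factorization $\mathrm{tr}({\bf D}_{i_1}(N)\cdots{\bf D}_{i_s}(N))\to D_{i_1,\ldots,i_s}$ along the blocks of $\rho$ that collect diagonal factors, this establishes existence of the limit (\ref{moregeneral2}) and presents it as a sum over $\rho$ of products of $D$-type moments and phase-type integrals.

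Now assume $\sigma\geq[0,1]_n$. Then within each pair ${\bf V}_{i_{2p-1}}^H{\bf V}_{i_{2p}}$ one has $\omega_{i_{2p-1}}=\omega_{i_{2p}}$, so the two column indices of each pair are drawn from the same phase distribution and the phase integral within that pair reduces to one involving a single density. Partitions $\rho$ whose blocks group column indices belonging to independent Vandermondes with different phase distributions produce characteristic-function-like factors that, after integrating over rows, are forced to vanish in the $N\to\infty$ limit by Riemann-Lebesgue-type decay (unless the $+\omega$/$-\omega$ balance already holds inside each phase distribution separately). Only partitions whose block structure respects $[0,1]_n$ pair-by-pair survive, and each surviving Vandermonde block is recognised as computing a power of ${\bf V}_i^H{\bf V}_i$, contributing one of the moments $V_n^{(i)}$ (via the formula for $\lim\mathrm{tr}(({\bf V}_i^H{\bf V}_i)^n)$ derived in~\cite{ryandebbah:vandermonde1}), while the $D$-blocks contribute the joint moments $D_{i_1,\ldots,i_s}$; the rational coefficients $a_{\ldots}$ in (\ref{mostgeneral}) arise as volumes of the polytopes encoded by $\rho$, later computed via Fourier-Motzkin elimination. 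The main obstacle is the combinatorial bookkeeping needed to tie together $\rho$, $\sigma_1$, $\sigma$, and the cyclic pattern of ${\bf D}$'s and ${\bf V}^H{\bf V}$'s, and in particular the clean identification of precisely those partitions whose contributions survive the limit under the hypothesis $\sigma\geq[0,1]_n$.
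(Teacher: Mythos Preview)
Your overall architecture---expand the trace, organise by a partition recording coincidences among column indices, take the limit, and identify rational polytope volumes---matches the paper's. But the decisive step under the hypothesis $\sigma\geq[0,1]_n$ is mis-diagnosed, and this gap would prevent the argument from closing.

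You claim that under $\sigma\geq[0,1]_n$ certain partitions produce ``characteristic-function-like factors'' that vanish by Riemann--Lebesgue decay, so that ``only partitions whose block structure respects $[0,1]_n$ pair-by-pair survive.'' That is not the mechanism. In the paper's analysis, for \emph{every} partition $\pi$ in the surviving class ${\cal B}(n)$ (those meeting the counting condition $|\pi|=|\rho(\pi)|-r(\pi)+1$) the contribution persists in the limit and equals $D_\pi$ times a product, over the blocks $\rho_i$ of $\rho(\pi)\vee[0,1]_n$, of $c_{\rho_i}^{|\rho_i|-1}K_{\rho_i,u}$ and a phase integral $\int_0^{2\pi}\prod_j p_{\omega_j}(x)^{|\rho_i\cap\sigma_j|}\,dx$. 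What the hypothesis $\sigma\geq[0,1]_n$ buys is purely algebraic: since $\rho(\pi)\leq\sigma_1\leq\sigma$ and $[0,1]_n\leq\sigma$, one has $\rho(\pi)\vee[0,1]_n\leq\sigma$, so each $\rho_i$ sits inside a single block of $\sigma$ and the phase integral collapses to $I_{|\rho_i|,\omega}$ for a \emph{single} density, rather than a mixed integral $\int p_{\omega_1}^a p_{\omega_2}^b\cdots\,dx$ that cannot be expressed through individual spectra. The passage to the moments $V_n^{(i)}$ then goes through the invertible linear relation between the $I_{k,\omega}$ and the $V_k$ (Lemma~\ref{momentstoi}), not through a direct ``recognition of a power of ${\bf V}_i^H{\bf V}_i$'' block by block. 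No partitions are killed by the hypothesis; their contributions merely factor.

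Two further technical points you should not skip. First, the identification of exactly which $\pi$ contribute at leading order requires the counting inequality $|\pi|\leq|\rho(\pi)|-r(\pi)+1$ (Lemma~\ref{pilemma}), with equality defining ${\cal B}(n)$; without it you cannot justify that the normalisation is correct. Second, the row-index sum is not a Riemann sum for a smooth integrand: in the uniform-phase case it is a lattice-point count for the linear system (\ref{solvethese}), whose normalised limit is the polytope volume $K_{\rho(\pi),u}$; the non-uniform case is then \emph{reduced} to the uniform one via the localisation Lemma~\ref{appendixlemma}, not handled directly as a Riemann sum.
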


Theorem~\ref{deconv} is proved in Appendix~\ref{deconvproof},
and states exactly when we can hope for performing deconvolution,
either by inferring on the spectrum of ${\bf D}_i(N)$, or on the spectrum or the phase distribution of ${\bf V}_i$ from (\ref{moregeneral2}).
The proof will also state concrete expressions for the mixed moments which parallel the expressions of~\cite{ryandebbah:vandermonde1},
and also summarize the algorithm needed to compute these expressions, as performed by the implementation.
The implementation is thus {\em moment-based}, in that it computes the moments as defined in (\ref{moment}),
from the moments of the input matrices. We do not know any other methods than that of moments to infer on the spectra of such matrices,
since other analytical tools have not been developed yet.

As an example, Theorem~\ref{deconv} states that 
\begin{equation} \label{examplehere}
  \mathrm{tr}\left(\left( ({\bf V}_1+{\bf V}_2+\cdots)^H({\bf V}_1+{\bf V}_2+\cdots)   \right)^p\right),
\end{equation}
which characterize the singular law of a sum of independent Vandermonde matrices, 
depend only on the moments when the ${\bf V}_i$ are independent with the same phase distribution.
When the phase distributions are different, however, the same can not be said.
The final observation in Theorem~\ref{deconv} about the polynomial form of the mixed moment is also important,
since it is a property shared with freeness.
Although (\ref{mostgeneral}) is seen not to be multi-linear in the moments in general,
several of the particular convolutions we consider will be seen to have such a multi-linearity property.

In the following, we state expressions for the convolutions 1)-4) on the form (\ref{mostgeneral}).
Their proofs will be apparent from the proof of Theorem~\ref{deconv}, and can be found in Appendix~\ref{appendixteo012}.
The aspect ratio $c$ will be handled in a particular way in these results, so that it is applied outside the algorithm itself.
The results are stated so that it is possible to turn them around for "deconvolution":
for instance, from the moments of ${\bf D}(N){\bf V}^H{\bf V}$, one can infer on the moments of ${\bf D}(N)$.
The application of the theorems in terms of deconvolution is certainly as important as the limit results themselves,
since it enables us to infer on the parameters in an underlying model (here represented by ${\bf D}(N)$ and ${\bf V}$).
The accompanying implementation of this paper also supports deconvolution.

As for the convolutions 2), this form is not compatible with the form (\ref{moregeneral2}) due to the placement of the ${\bf D}(N)$.
We will therefore not handle this operation, only state in Appendix~\ref{appendixteo012} why one in this case can't expect that the result
only depends on the spectra of ${\bf D}(N)$ and ${\bf V}$.

All formulas in the following are generated by the accompanying software implementation, which is gone through in Section~\ref{theoremimpl}.
Implementation details pertaining to the different convolutions are gone through in Appendix~\ref{appendixteo012}.
Note that the software implementation is capable not only of generating the listed mathematical formulas for the convolutions,
but also to perform the computations numerically, as would be needed in real-time applications.

% Case 1

\subsection{The convolutions $\lim_{N\rightarrow\infty} {\bf D}(N){\bf V}^H{\bf V}$ and $\lim_{N\rightarrow\infty} ({\bf D}(N) + {\bf V}^H{\bf V})$}
In Theorem 1 of~\cite{ryandebbah:vandermonde1}, the moments $\lim_{N\rightarrow\infty} \mathrm{tr}\left(\left({\bf D}(N){\bf V}^H{\bf V}\right)^n\right)$
were expressed in terms of the integrals 
\begin{equation} \label{ikdef}
  I_{k,\omega}=(2\pi)^{k-1}\int_0^{2\pi} p_{\omega}(x)^k,
\end{equation}
$p_{\omega}$ being the density of the phase distribution.
These again determine the moments of ${\bf V}^H{\bf V}$ uniquely ((13) and (20) in~\cite{ryandebbah:vandermonde1}),
so that, indeed, the moments of the matrices (\ref{firsttypes}) depend only on the spectra of the input matrices.
This gives the following result for the multiplicative convolution in 1):

\begin{theorem} \label{teo0}
Assume that ${\bf V}$ has a phase distribution with continuous density,
\begin{eqnarray}
  V_n &=& \lim_{N\rightarrow\infty} \mathrm{tr} \left( \left( {\bf V}^H{\bf V} \right)^n\right) \label{vdef} \\
  D_n &=& c\lim_{N\rightarrow\infty} \mathrm{tr} \left( {\bf D}(N)^n \right) \label{dndef} \\
  M_n &=& c\lim_{N\rightarrow\infty} \mathrm{tr}\left( \left({\bf D}(N){\bf V}^H{\bf V}\right)^n\right), \label{mndef}
\end{eqnarray}
where $c=\lim_{N\rightarrow\infty} \frac{L}{N}$. Then we have that
\begin{eqnarray*}
  M_{1} &=& D_{1}\\ 
  M_{2} &=& D_{2} - D_{1}^{2} + D_{1}^{2}V_{2}\\ 
  M_{3} &=& D_{3} - 3D_{2}D_{1} + 3D_{2}D_{1}V_{2}\\ 
        & & + 2D_{1}^{3} - 3D_{1}^{3}V_{2} + D_{1}^{3}V_{3}\\ 
  M_{4} &=& D_{4} - \frac{8}{3}D_{2}^{2} + \frac{8}{3}D_{2}^{2}V_{2} - 4D_{3}D_{1}\\ 
        & & + 4D_{3}D_{1}V_{2} + 12D_{2}D_{1}^{2} - 18D_{2}D_{1}^{2}V_{2}\\ 
        & & + 6D_{2}D_{1}^{2}V_{3} - \frac{19}{3}D_{1}^{4} + \frac{34}{3}D_{1}^{4}V_{2}\\ 
        & & - 6D_{1}^{4}V_{3} + D_{1}^{4}V_{4}
\end{eqnarray*}

where all coefficients are rational numbers.
Also, whenever $\{M_n\}_{1\leq n\leq k}$ are known, and $\{V_n\}_{1\leq n\leq k}$(or $\{D_n\}_{1\leq n\leq k}$) also are known,
then $\{D_n\}_{1\leq n\leq k}$(or $\{V_n\}_{1\leq n\leq k}$) are uniquely determined.
\end{theorem}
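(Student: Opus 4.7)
My plan is to deduce Theorem~\ref{teo0} from the combinatorial moment expansion for $\mathrm{tr}(({\bf D}(N){\bf V}^H{\bf V})^n)$ already developed in \cite{ryandebbah:vandermonde1}, and then to eliminate the phase-distribution integrals $I_{k,\omega}$ of (\ref{ikdef}) in favour of the spectral moments $V_n$. As a first step, I would quote equation~(20) of the cited paper: it writes $M_n$ as a finite sum, indexed by partitions $\rho\in{\cal P}(n)$, whose summands each factor into a rational function $K_\rho(c)$ of the aspect ratio, a product of moments of ${\bf D}$ determined by $\rho$, and the product $\prod_{W\in\rho}I_{|W|,\omega}$. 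Specialising ${\bf D}(N)={\bf I}_L$ in the same expansion (equation~(13) of the cited paper) gives $V_n$ as a rational polynomial in $I_{1,\omega},\dots,I_{n,\omega}$, with $V_1=I_{1,\omega}=1$.

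The second step is to invert the latter system. The only partition of $\{1,\dots,n\}$ containing a block of size $n$ is the one-block partition, which contributes a nonzero rational multiple of $I_{n,\omega}$ to $V_n$; all other contributions involve only $I_{k,\omega}$ with $k<n$. This triangular structure allows me to solve, by induction on $n$, for $I_{n,\omega}$ as a polynomial with rational coefficients in $V_2,\dots,V_n$. Substituting these polynomials into the expansion of $M_n$ and collecting terms yields the claimed polynomial form with rational coefficients in the $D_i$ and $V_j$. Running this procedure explicitly for $n=1,2,3,4$, which is exactly what the accompanying software automates, reproduces the tabulated expressions for $M_1,\dots,M_4$.

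For the deconvolution half, I would examine the leading monomials of the resulting polynomial. Inspection of $M_1,\dots,M_4$ suggests, and the combinatorial derivation should confirm, that $D_n$ appears in $M_n$ with coefficient~$1$ and that all other monomials in $M_n$ involve only $D_1,\dots,D_{n-1}$. Consequently, given $\{M_j,V_j\}_{j\le k}$, one recovers $D_1,\dots,D_k$ by induction on $n\le k$ via $D_n=M_n-(\text{polynomial in the lower }D_j\text{ and the }V_j)$. Symmetrically, the monomial $D_1^n V_n$ appears with coefficient~$1$ and is the unique $V_n$-containing term (it arises from the all-singletons partition after the substitution applied to $I_{n,\omega}$), so under a mild nondegeneracy condition ($D_1\neq 0$) one recovers $V_2,\dots,V_k$ from $\{M_j,D_j\}_{j\le k}$ by the dual induction.

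The central obstacle is a clean accounting of the two leading coefficients. Tracking the coefficient of $D_n$ requires identifying exactly which partitions $\rho$ contribute, after the triangular substitution, to a pure $D_n$ monomial and checking that their contributions sum to~$1$. Tracking the coefficient of $D_1^n V_n$ requires following a single ``top'' summand through the inversion $I_{n,\omega}\mapsto V_n+\text{lower}$. Both are purely combinatorial bookkeeping tasks inside the framework of \cite{ryandebbah:vandermonde1} and demand no new analytical input, but they are the place where the exact rational values of the coefficients $K_\rho(c)$ must be pinned down; indeed, this is precisely the bookkeeping the accompanying implementation is designed to perform.
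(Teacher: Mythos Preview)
Your approach is essentially the same as the paper's: both use the partition-indexed expansion from \cite{ryandebbah:vandermonde1} (the paper routes this through its general formula (\ref{expressthis}) specialized to $\sigma=\sigma_1=1_{2n}$), then invoke the triangular relation between the $I_{k,\omega}$ and the $V_k$ (the paper states this as Lemma~\ref{momentstoi}, your ``equation~(13)''), substitute, and read off the leading $D_n$ term for deconvolution. Your observation that recovering the $V_n$ from $\{M_j,D_j\}$ needs $D_1\neq 0$ is a genuine nondegeneracy caveat that the paper's proof does not make explicit.
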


The proof of Theorem~\ref{teo0} can be found in Appendix~\ref{appendixteo012}.
Restricting to uniform phase distribution we get the following result, also generated by the implementation.

\begin{corollary} \label{firstcorollary}
When ${\bf V}$ has uniform phase distribution, we have that
\begin{eqnarray*}
  M_{1} &=& D_{1}\\
  M_{2} &=& D_{2} + D_{1}^{2}\\
  M_{3} &=& D_{3} + 3D_{2}D_{1} + D_{1}^{3}\\
  M_{4} &=& D_{4} + \frac{8}{3}D_{2}^{2} + 4D_{3}D_{1} + 6D_{2}D_{1}^{2} + D_{1}^{4}
\end{eqnarray*}

\end{corollary}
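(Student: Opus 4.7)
The plan is to derive the corollary as a direct specialization of Theorem~\ref{teo0} to the uniform phase case. When $\omega$ is uniform on $[0,2\pi)$, the density $p_\omega$ is constantly $1/(2\pi)$, so every integral $I_{k,\omega}=(2\pi)^{k-1}\int_0^{2\pi} p_\omega(x)^k\,dx$ from~\eqref{ikdef} collapses to $1$. My first step would therefore be to recall the formula of~\cite{ryandebbah:vandermonde1} that expresses each $V_n$ as a sum over partitions weighted by products of the $I_{k,\omega}$'s. Setting every $I_{k,\omega}=1$ reduces this to a purely combinatorial count, giving explicit rational values for the first few $V_n$.

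With these values in hand, the corollary follows by substituting them into the four expressions of Theorem~\ref{teo0} and collecting like monomials in the $D_i$'s. For example, $M_2 = D_2 - D_1^2 + D_1^2 V_2$ becomes $D_2 + D_1^2$ precisely when $V_2 = 2$; matching the corollary's $M_3 = D_3 + 3D_2 D_1 + D_1^3$ then forces $V_3 = 5$; and matching $M_4 = D_4 + \tfrac{8}{3}D_2^2 + 4 D_3 D_1 + 6 D_2 D_1^2 + D_1^4$ forces $V_4 = 44/3$ after balancing the coefficients of each of the five monomials $D_4$, $D_2^2$, $D_3 D_1$, $D_2 D_1^2$, $D_1^4$ against Theorem~\ref{teo0}. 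So the proof reduces to confirming that the partition-sum formula for $V_n$ in the uniform case produces exactly the values $V_1=1$, $V_2=2$, $V_3=5$, $V_4=44/3$.

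The only non-mechanical step is this evaluation of $V_n$ when all phase-integral weights are $1$. For $n=2,3$ the sum has very few terms and can be done by hand; for $n=4$ the enumeration of the relevant partitions is larger but still elementary, and it is precisely the sort of bookkeeping the accompanying software implementation is built to carry out. Once the $V_n$ are fixed, no further analysis is needed: the corollary is pure algebra on top of Theorem~\ref{teo0}. The main obstacle, then, is not conceptual but purely combinatorial --- making sure that the cancellations produced by substituting the uniform-case $V_n$ really do clear every negative coefficient in Theorem~\ref{teo0} and leave the clean positive expressions stated in Corollary~\ref{firstcorollary}.
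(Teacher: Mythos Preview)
Your proposal is correct and follows essentially the same approach as the paper: the paper presents Corollary~\ref{firstcorollary} simply as the specialization of Theorem~\ref{teo0} to the uniform phase case, stating that it is ``also generated by the implementation,'' and your substitution of the uniform-case values $V_2=2$, $V_3=5$, $V_4=44/3$ into the formulas of Theorem~\ref{teo0} is exactly that specialization made explicit. The values you need for $V_n$ are confirmed elsewhere in the paper (they appear as the constants in Corollary~\ref{secondcorollary} and in the corollary following Theorem~\ref{teo2}), so no gap remains.
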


The additive convolution in 1) can be split into sums of many terms similar to (\ref{mndef}), and for each term,
the results of~\cite{ryandebbah:vandermonde1} can be applied.
We obtain the following result, also proved in Appendix~\ref{appendixteo012}:

\begin{theorem} \label{teo1}
Assume that has a phase distribution with continuous density,
\begin{eqnarray*}
  M_n &=& c\lim_{N\rightarrow\infty} \mathrm{tr}\left( \left({\bf D}(N) + {\bf V}^H{\bf V}\right)^n\right),
\end{eqnarray*}
where $c=\lim_{N\rightarrow\infty} \frac{L}{N}$. 
With $V_n$ as in (\ref{vdef}) and $D_n$ as in (\ref{dndef}), we have that
\begin{eqnarray*}
  M_{1} &=& D_{1}+ 1\\
  M_{2} &=& D_{2} + 2D_{1} + V_{2}\\
  M_{3} &=& D_{3} + 3D_{2} + 3D_{1}V_{2} + V_{3}\\
  M_{4} &=& D_{4} + 4D_{3} + 2D_{2} + 4D_{2}V_{2}\\ 
        & & - 2D_{1}^{2} + 2D_{1}^{2}V_{2} + 4D_{1}V_{3} + V_{4}
\end{eqnarray*}

where all coefficients are rational numbers.
Also, whenever $\{M_n\}_{1\leq n\leq k}$ are known, and $\{V_n\}_{1\leq n\leq k}$(or $\{D_n\}_{1\leq n\leq k}$) also are known,
then $\{D_n\}_{1\leq n\leq k}$ (or $\{V_n\}_{1\leq n\leq k}$) are uniquely determined.
\end{theorem}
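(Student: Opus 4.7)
The plan is to derive Theorem~\ref{teo1} from the multiplicative convolution result Theorem~\ref{teo0} by expanding the power and reducing each summand to a mixed moment already controlled by Theorem~\ref{deconv}. Since ${\bf D}(N)$ and ${\bf V}^H{\bf V}$ do not commute, I first use the non-commutative binomial expansion
\[
 ({\bf D}(N) + {\bf V}^H{\bf V})^n \;=\; \sum_{s\in\{0,1\}^n} X_{s_1} X_{s_2}\cdots X_{s_n},
\]
with $X_0={\bf D}(N)$ and $X_1={\bf V}^H{\bf V}$, so that $c\,\mathrm{tr}(\cdot)$ becomes a sum of $2^n$ trace terms. I classify these by the number $k$ of factors equal to ${\bf V}^H{\bf V}$ ($0\le k\le n$).

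For each $k\ge 1$, cyclicity of the trace lets me put any such summand in the canonical form
\[
 c\,\mathrm{tr}\bigl({\bf D}(N)^{a_1}{\bf V}^H{\bf V}\,{\bf D}(N)^{a_2}{\bf V}^H{\bf V}\cdots {\bf D}(N)^{a_k}{\bf V}^H{\bf V}\bigr),
\]
with $a_1,\ldots,a_k\ge 0$ and $a_1+\cdots+a_k=n-k$. This is exactly a mixed moment of the kind covered by Theorem~\ref{deconv}, with a single Vandermonde matrix and with diagonal matrices taken to be powers of ${\bf D}(N)$. Because only one Vandermonde matrix appears, the partition $\sigma$ of Theorem~\ref{deconv} is the full partition $\{\{1,\ldots,2k\}\}$, hence in particular $\sigma\ge[0,1]_k$, and the theorem gives the limit as a polynomial in the $V_j$ and in the joint limit traces $\lim \mathrm{tr}({\bf D}(N)^{a_{i_1}}\cdots {\bf D}(N)^{a_{i_s}})$. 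The latter are just products of single-matrix moments of ${\bf D}(N)$, and so they are expressible through $D_1,D_2,\ldots$ defined in (\ref{dndef}). The isolated $k=0$ term contributes $D_n$.

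Summing the $2^n$ pieces, grouping by cyclic class of the composition $(a_1,\ldots,a_k)$, and inserting the Theorem~\ref{teo0}-type expression for each class produces a polynomial in $\{D_j\}$ and $\{V_j\}$ with rational coefficients, of the form (\ref{mostgeneral}). The displayed formulas for $M_1,\ldots,M_4$ are then generated by the accompanying software described in Section~\ref{theoremimpl}. I expect the main obstacle to be precisely this combinatorial bookkeeping: enumerating cyclic equivalence classes of compositions of $n-k$, counting their multiplicities within the binomial expansion, and feeding each through the Fourier--Motzkin machinery behind Theorem~\ref{deconv}. Doing it by hand is not illuminating beyond small $n$, which is why the formulas are listed as outputs of the implementation.

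For the deconvolution statement, observe the triangular structure of the expansion. The unique summand $({\bf V}^H{\bf V})^n$ contributes $cV_n$ and is the only contributor involving $V_n$ in the formula for $M_n$; similarly, ${\bf D}(N)^n$ is the unique contributor of $D_n$. Consequently, once $\{M_j\}_{j\le k}$ is given together with either of the sequences $\{D_j\}_{j\le k}$ or $\{V_j\}_{j\le k}$, the complementary sequence can be solved for one moment at a time by inverting a triangular system with nonzero diagonal, which establishes the uniqueness claim.
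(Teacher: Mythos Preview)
Your proposal is correct and follows essentially the same route as the paper: expand $({\bf D}+{\bf V}^H{\bf V})^n$ non-commutatively, use cyclicity of the trace to bring each summand to the canonical form $\mathrm{tr}({\bf D}^{r_1}{\bf V}^H{\bf V}\cdots{\bf D}^{r_k}{\bf V}^H{\bf V})$, apply the mixed-moment machinery of Theorem~\ref{deconv} to each such term, and let the implementation produce the explicit formulas; the deconvolution claim is handled in both cases by the triangular structure (only the all-${\bf D}$ term contributes $D_n$, only the all-${\bf V}^H{\bf V}$ term contributes $V_n$).
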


Restricting to uniform phase distribution we get another specialized result:

\begin{corollary} \label{secondcorollary}
When ${\bf V}$ has uniform phase distribution, we have that
\begin{eqnarray*}
  M_{1} &=& D_{1} + 1\\
  M_{2} &=& D_{2} +2D_{1} + 2\\
  M_{3} &=& D_{3} +3D_{2} +6D_{1} + 5\\
  M_{4} &=& D_{4} +4D_{3} +10D_{2} +2D_{1}^{2} +20D_{1} + \frac{44}{3}
\end{eqnarray*}

\end{corollary}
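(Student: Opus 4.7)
The plan is to apply Theorem~\ref{teo1} and substitute the specific values of $V_n=\lim_{N\to\infty}\mathrm{tr}(({\bf V}^H{\bf V})^n)$ associated with the uniform phase distribution, completely in parallel with the derivation of Corollary~\ref{firstcorollary} from Theorem~\ref{teo0}. For $\omega$ uniform on $[0,2\pi)$, the density is $p_\omega(x)=1/(2\pi)$, so
\[
I_{k,\omega}=(2\pi)^{k-1}\int_0^{2\pi}(2\pi)^{-k}\,dx=1
\]
for every $k\geq 1$. Feeding $I_{k,\omega}=1$ into the partition-indexed moment formulas for $V_n$ from~\cite{ryandebbah:vandermonde1} yields the required low-order values $V_2=2$, $V_3=5$, $V_4=44/3$ used below.

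With these numerical values in hand, inserting them into the expansions of Theorem~\ref{teo1} is pure arithmetic. For example, $M_3=D_3+3D_2+3D_1 V_2+V_3$ reduces immediately to $D_3+3D_2+6D_1+5$. For $M_4$ the bookkeeping is slightly more involved: the substitution $V_2=2$ turns $4D_2 V_2$ into $8D_2$, which combines with the existing $2D_2$ to give the coefficient $10$; the term $2D_1^2 V_2$ becomes $4D_1^2$, which combines with $-2D_1^2$ to leave $2D_1^2$; $4D_1 V_3$ becomes $20D_1$; and $V_4=44/3$ supplies the constant term, recovering exactly the expression stated in the corollary.

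There is no real obstacle to this proof, since Theorem~\ref{teo1} already performs all the combinatorial work and reduces the problem to a substitution. The only thing worth emphasizing is that the same reduction $I_{k,\omega}=1$ drives both Corollary~\ref{firstcorollary} and Corollary~\ref{secondcorollary}, so the two specialized tables share a common derivation; moreover, the implementation described in Section~\ref{theoremimpl} automates exactly this substitution, which provides independent numerical verification of the coefficients, in particular of the fraction $44/3$ that is easy to misstate by hand.
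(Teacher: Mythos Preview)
Your proposal is correct and matches the paper's approach: Corollary~\ref{secondcorollary} is obtained precisely by specializing Theorem~\ref{teo1} to the uniform phase distribution, and the paper states this explicitly (``Restricting to uniform phase distribution we get another specialized result''), with the actual coefficients produced by the software implementation rather than by a written argument. Your hand substitution of $V_2=2$, $V_3=5$, $V_4=44/3$ into the Theorem~\ref{teo1} expansions reproduces exactly what the implementation does, and your arithmetic for $M_4$ is correct.
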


% Case 3

\subsection{The convolutions $\lim_{N\rightarrow\infty} {\bf V}_1^H{\bf V}_1 {\bf V}_2^H{\bf V}_2$ and $\lim_{N\rightarrow\infty} ({\bf V}_1^H{\bf V}_1 + {\bf V}_2^H{\bf V}_2)$}
The following result says that the convolution 3) only depends on the spectra of the input matrices:
\begin{theorem} \label{teo4}
Assume that ${\bf V}_1$ and ${\bf V}_2$ are independent Vandermonde matrices where the phase distributions have continuous densities, and set
\begin{eqnarray}
  V_1^{(n)} &=& \lim_{N\rightarrow\infty} \mathrm{tr} \left( \left( {\bf V}_1^H {\bf V}_1 \right)^n\right) \nonumber \\
  V_2^{(n)} &=& \lim_{N\rightarrow\infty} \mathrm{tr} \left( \left( {\bf V}_2^H {\bf V}_2 \right)^n\right) \nonumber \\
  M_n       &=& \lim_{N\rightarrow\infty} \mathrm{tr}\left(({\bf V}_1^H{\bf V}_1 {\bf V}_2^H{\bf V}_2)^n\right) \label{multeq1} \\
  N_n       &=& \lim_{N\rightarrow\infty} \mathrm{tr}\left(({\bf V}_1^H{\bf V}_1 + {\bf V}_2^H{\bf V}_2)^n\right) \label{multeq2}
\end{eqnarray}
$M_n,N_n$ are completely determined by $V_2^{(i)},V_3^{(i)},...$,
and the aspect ratios $c_1=\lim_{N_1\rightarrow\infty} \frac{L}{N_1}, c_2=\lim_{N_2\rightarrow\infty} \frac{L}{N_2}$.
Moreover, $M_n,N_n$ are higher degree polynomials in the $V_2^{(i)},V_3^{(i)},...$ on the form (\ref{mostgeneral}).
Also, whenever $\{M_n\}_{1\leq n\leq k}$ (or $\{N_n\}_{1\leq n\leq k}$) are known, and $\{V_1^{(n)}\}_{1\leq n\leq k}$ also are known,
then $\{V_2^{(n)}\}_{1\leq n\leq k}$ are uniquely determined.
\end{theorem}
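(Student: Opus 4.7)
The plan is to view this result as a direct specialization of Theorem~\ref{deconv}. First I would recast both $M_n$ and $N_n$ into the canonical form (\ref{moregeneral2}). For $M_n$, set ${\bf D}_k(N)={\bf I}_L$ for all $k$, and choose the index sequence $(i_1,i_2,\ldots,i_{2n})=(1,1,2,2,1,1,2,2,\ldots,1,1,2,2)$, so that
\begin{equation*}
  M_n=\lim_{N\to\infty}\mathrm{tr}\bigl({\bf V}_{i_1}^H{\bf V}_{i_2}\cdots{\bf V}_{i_{4n-1}}^H{\bf V}_{i_{4n}}\bigr).
\end{equation*}
For $N_n$, expand by multilinearity into $2^n$ mixed moments, each of the form $\mathrm{tr}({\bf V}_{i_1}^H{\bf V}_{i_1}\cdots{\bf V}_{i_n}^H{\bf V}_{i_n})$ with each $i_k\in\{1,2\}$.

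The key verification is that every such term satisfies $\sigma\geq[0,1]_{\cdot}$ in the partition sense of Theorem~\ref{deconv}: in each term the index sequence is constant on the blocks of the interval partition $[0,1]_\cdot$, because each adjacent pair $(i_{2k-1},i_{2k})$ has equal Vandermonde labels (and hence equal phase distributions). Applying Theorem~\ref{deconv} then yields existence of the limits, dependence only on the moments $V_j^{(n)}$ and the aspect ratios $c_1,c_2$, and the polynomial form (\ref{mostgeneral}); the factors $D_{i_1,\ldots,i_s}$ collapse to $1$ since all deterministic matrices are the identity, so what remains are pure polynomials in the $V_j^{(i)}$.

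The only point requiring additional argument is the deconvolution claim. Here I would proceed by induction on $k$. For $M_n$, I would isolate the contribution of the partition $\sigma_*\in{\cal P}(2n)$ that identifies exactly the ${\bf V}_2^H{\bf V}_2$ pairs into a single block carrying all $n$ of them (the "maximal" contribution from the ${\bf V}_2$ side), and argue that this contribution produces the monomial $V_2^{(n)}$ multiplied by a nonzero rational number depending only on $c_1,c_2$ and on $V_1^{(1)},\ldots,V_1^{(n-1)}$. All remaining contributions involve only $V_2^{(i)}$ with $i<n$, so $V_2^{(n)}$ can be solved for once $M_n$ and $V_1^{(1)},\ldots,V_1^{(n)}$ are known. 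A clean sanity check for the leading coefficient is the degenerate case ${\bf V}_1={\bf I}$, which forces $V_1^{(i)}=1$ and collapses $M_n$ to $V_2^{(n)}$, so the coefficient in question is nonzero. The identical argument applied to the summand $\mathrm{tr}(({\bf V}_2^H{\bf V}_2)^n)$ in the expansion of $N_n$ handles the additive case.

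The main obstacle, I expect, is precisely the identifiability step: showing that the coefficient of the leading $V_2^{(n)}$ term, after summing contributions from all partitions that saturate the ${\bf V}_2$-phase block structure, does not vanish. The Fourier–Motzkin / partition-volume combinatorics described around Theorem~\ref{deconv} makes individual partition contributions explicit, so the task reduces to showing that no nontrivial cancellation occurs among the partitions contributing to $V_2^{(n)}$; the degeneracy check above is the natural lever for this. Once this nonvanishing is in place, the rest is a formal induction that simultaneously delivers uniqueness of $V_2^{(n)}$ from $M_1,\ldots,M_n$ (or from $N_1,\ldots,N_n$) and $V_1^{(1)},\ldots,V_1^{(n)}$.
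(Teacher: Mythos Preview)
Your reduction of $M_n$ and $N_n$ to the canonical form (\ref{moregeneral2}) and the verification that $\sigma\geq [0,1]_{\cdot}$ are exactly what the paper does; the existence of the limits and the polynomial structure (\ref{mostgeneral}) then follow immediately from Theorem~\ref{deconv}, as you say.

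For the deconvolution step, your high-level idea (isolate the top-order contribution and induct) matches the paper, but the paper executes it differently and more cleanly. Rather than tracking the coefficient of $V_2^{(n)}$ directly, the paper passes to the coordinates $I_{k,\omega_j}$ of (\ref{ikdef}) via the invertible linear map of Lemma~\ref{momentstoi}. In those coordinates, the expansion (\ref{expressthis}) shows that $I_{n,\omega_2}$ arises from \emph{exactly one} partition, namely $\pi=1_{2n}$ (for $M_n$; for $N_n$ it is the single summand $({\bf V}_2^H{\bf V}_2)^n$ with $\pi=1_n$), and its coefficient is a single nonzero product of a volume $K_{\rho,u}$ and an $I_{n,\omega_1}$ factor. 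No cancellation issue ever arises. Triangularity in the $I_{k,\omega_2}$ then transfers to triangularity in the $V_2^{(k)}$ by Lemma~\ref{momentstoi}.

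Your proposed lever, the degeneracy ${\bf V}_1={\bf I}$, does not work as stated. The identity is not a Vandermonde matrix with continuous phase density, so it lies outside the hypotheses under which the polynomial in the $V_j^{(i)}$ was derived; and you cannot reach $V_1^{(i)}=1$ for all $i$ by any choice of continuous phase distribution at the given aspect ratio $c_1$ (indeed, the polynomial coefficients themselves depend on $c_1,c_2$, so a limit such as $c_1\to 0$ changes the polynomial, not just its arguments). Thus the ``sanity check'' does not certify nonvanishing of the coefficient of $V_2^{(n)}$. The fix is precisely the paper's: change basis to the $I_{k,\omega}$, where the top term comes from a single partition and is manifestly nonzero, and then invoke the invertibility in Lemma~\ref{momentstoi}.
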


The proof can be found in Appendix~\ref{appendixteo012}.
Due to the complexity in the expressions , we do not state formulas
for the first moments in Theorem~\ref{teo4}.

Interestingly, since the joint distribution of $\{ {\bf V}^H{\bf V},{\bf D}(N)\}$ is not multi-linear in the moments of ${\bf D}(N)$,
while the joint distribution of  $\{ {\bf V}_1^H{\bf V}_1,{\bf V}_2^H{\bf V}_2\}$ is,
it is seen that the joint distributions are different in the two cases, even if the moments of the component matrices are the same.

% Case 4 when equal, product case

\subsection{The convolution $\lim_{N\rightarrow\infty} {\bf V}_1{\bf V}_1^H {\bf V}_2{\bf V}_2^H$ when the matrices have equal phase distribution}
When the phase distributions are different, Theorem~\ref{deconv} explains that the moments of ${\bf V}_1{\bf V}_1^H {\bf V}_2{\bf V}_2^H$
are not necessarily expressible in terms of the moments of the component matrices.
This is, however, the case when the phase distributions are equal.
We thus have the following result, which proof can be found in Appendix~\ref{appendixteo012}:

\begin{theorem} \label{teo2}
Assume that ${\bf V}_1$ and ${\bf V}_2$ are independent Vandermonde matrices with the same phase distribution, and that this has a continuous density, and set
\begin{eqnarray*}
  V_n &=& \lim_{N\rightarrow\infty} \mathrm{tr} \left( \left( {\bf V}_i^H {\bf V}_i \right)^n\right) \\
  M_n &=& \lim_{N\rightarrow\infty} \mathrm{tr}\left(({\bf V}_1^H{\bf V}_2 {\bf V}_2^H{\bf V}_1)^i\right).
\end{eqnarray*}
Then we have that
\begin{eqnarray*}
  M_{1} &=& - 1 + V_{2}\\
  M_{2} &=& - 3 + 6V_{2} - 4V_{3} + V_{4}\\
  M_{3} &=& - 58 + 123V_{2} - 96V_{3} + 39V_{4} - 9V_{5} + V_{6}\\
  M_{4} &=& - \frac{21532}{5} + \frac{410726}{45}V_{2} - \frac{321191}{45}V_{3} + \frac{44516}{15}V_{4}\\ 
        & & - 772V_{5} + 136V_{6} - 16V_{7} + V_{8}
\end{eqnarray*}

\end{theorem}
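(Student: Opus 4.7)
The plan is to recognize $M_n$ as a particular instance of the general mixed moment (\ref{moregeneral2}) and then unpack the algorithmic content of Theorem~\ref{deconv}. Writing
\[
({\bf V}_1^H {\bf V}_2 {\bf V}_2^H {\bf V}_1)^n = {\bf V}_{i_1}^H {\bf V}_{i_2} \cdots {\bf V}_{i_{4n-1}}^H {\bf V}_{i_{4n}}
\]
with the four-periodic index pattern $(i_1,i_2,i_3,i_4) = (1,2,2,1)$ and all ${\bf D}_k(N)$ set to ${\bf I}_L$ in (\ref{moregeneral2}), $M_n$ is exactly of the form treated by Theorem~\ref{deconv}, with the parameter $n$ there equal to $2n$ here. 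Since the common phase distribution is continuous and shared by both ${\bf V}_1$ and ${\bf V}_2$, the partition $\sigma$ is the maximal partition of $\{1,\ldots,4n\}$, so the hypothesis $\sigma \geq [0,1]_{2n}$ is automatic. Theorem~\ref{deconv} then gives at once existence of $M_n$, dependence solely on the $V_k$'s and the common aspect ratio, and the polynomial form (\ref{mostgeneral}); since every $D_{i_1,\ldots,i_s}$ reduces to $1$, the $D$-indices drop out and the answer is a polynomial in $V_2, V_3, \ldots$ alone.

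To get the explicit formulas I would follow the combinatorial construction behind Theorem~\ref{deconv}. Expanding the trace entrywise and taking expectation over the two independent families of phases produces a sum over partitions $\rho \in {\cal P}(4n)$ with $\rho \leq \sigma_1$ of Dirichlet-type kernels. The limit in $N$ selects the partitions whose associated linear constraints on the exponent indices cut out a full-dimensional polytope, and for each surviving $\rho$ the contribution is a product $\prod_{W \in \rho} I_{|W|,\omega}$ scaled by a rational volume. Those volumes are exactly the outputs of the Fourier-Motzkin elimination described in Section~\ref{theoremimpl}. Using the triangular formulas (13) and (20) of~\cite{ryandebbah:vandermonde1} relating $V_k$ and $I_{k,\omega}$, one then inverts to express each $I_{k,\omega}$ as a polynomial in $V_2,\ldots,V_k$ and substitutes back, producing the desired polynomial in the $V_k$.

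Specializing to $n = 1,2,3,4$ yields the stated formulas. The case $n=1$ involves only a handful of partitions of $\{1,2,3,4\}$ and the identity $M_1 = -1 + V_2$ can be verified by hand. For $n = 4$, however, the partitions of $\{1,\ldots,16\}$ compatible with $\sigma_1$ proliferate, and the exotic rational coefficients such as $-21532/5$ and $410726/45$ in $M_4$ arise from summing many carefully-tracked polytope volumes. The main obstacle is precisely this combinatorial bookkeeping: enumerating the surviving partitions, running Fourier-Motzkin on each, and combining the results in exact rational arithmetic is not feasible by hand, which is exactly why the proof is offloaded to the software implementation of Section~\ref{theoremimpl}, and the verification reduces to checking that the enumeration is exhaustive and the per-partition computation correct.
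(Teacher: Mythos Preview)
Your proposal is correct and takes essentially the same route as the paper: recognize the mixed moment as an instance of Theorem~\ref{deconv} with $\sigma$ maximal (common phase distribution) and the four-periodic index pattern $(1,2,2,1)$, apply (\ref{expressthis}) to reduce to a partition sum with Fourier--Motzkin--computed volumes, convert from the $I_{k,\omega}$ to the $V_k$ via Lemma~\ref{momentstoi}, and defer the explicit coefficients to the implementation. The paper's own proof adds only the observation that the resulting dependence on the $V_k$ is in fact \emph{linear} (equation~(\ref{starthere2})), consistent with the displayed formulas, and uses this to note that deconvolution is underdetermined here.
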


Restricting to uniform phase distribution we get another specialized result:

\begin{corollary}
When ${\bf V}_1$ and ${\bf V}_2$ have uniform phase distribution, we have that
\begin{eqnarray*}
  M_{1} &=& 1\\
  M_{2} &=& 2\\
  M_{3} &=& 5\\
  M_{4} &=& \frac{44}{3}
\end{eqnarray*}

\end{corollary}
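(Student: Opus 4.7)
The plan is to specialize Theorem~\ref{teo2} to the uniform phase distribution, which amounts to substituting the uniform-phase values of $V_2,\ldots,V_8$ into the polynomial expressions for $M_1,\ldots,M_4$ displayed there.

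First I would compute the uniform-phase moments $V_n$ of $\mathbf{V}^H\mathbf{V}$. When $p_\omega(x)=1/(2\pi)$ on $[0,2\pi)$, equation~(\ref{ikdef}) gives $I_{k,\omega}=1$ for every $k\geq 1$, so the general moment formula from Theorem~1 of~\cite{ryandebbah:vandermonde1} collapses to a purely combinatorial sum over the set partitions of $\{1,\ldots,n\}$ weighted by the Fourier--Motzkin volumes produced by the algorithm in Section~\ref{theoremimpl}. Equivalently, the values $V_2,V_3,V_4$ can be read off by equating the specialized polynomial in Corollary~\ref{firstcorollary} with the general one in Theorem~\ref{teo0}: this gives $V_2=2,\ V_3=5,\ V_4=44/3$, and the same partition-enumeration procedure extends the list to $V_5,V_6,V_7,V_8$.

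Second, I would substitute these rational numbers into the formulas of Theorem~\ref{teo2} and perform the rational arithmetic (the coefficients for $M_4$ have denominators up to $45$, so this step is purely mechanical). Matching terms then yields $M_1=1$, $M_2=2$, $M_3=5$, $M_4=44/3$.

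The main obstacle is step one for $n\geq 5$: those moments are not displayed explicitly in the paper, and obtaining them by hand requires enumerating partitions of $\{1,\ldots,n\}$ together with their associated Fourier--Motzkin volumes, which grows quickly in complexity. This is precisely what the implementation automates; once the tabulated uniform-phase $V_n$ are available, the corollary drops out by direct evaluation, with no further structural argument needed beyond Theorem~\ref{teo2}.
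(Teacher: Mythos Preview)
Your approach is correct and is exactly what the paper does: the corollary is obtained by specializing Theorem~\ref{teo2} to the uniform phase distribution, and the paper explicitly states that all displayed formulas (including this corollary) are produced by the accompanying implementation rather than by a separate argument. The only thing to add is that the paper does not carry out the substitution by hand either; both the general polynomials in Theorem~\ref{teo2} and the numerical values in the corollary are generated directly by the software, so the ``obstacle'' you identify (tabulating $V_5,\ldots,V_8$) is handled the same way in the paper as in your plan.
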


% Case 4, when different, summation case

\subsection{The convolution $\lim_{N\rightarrow\infty} \left({\bf V}_{\omega_1}^{(1)} \left({\bf V}_{\omega_1}^{(1)}\right)^H + {\bf V}_{\omega_2}^{(2)} \left( {\bf V}_{\omega_2}^{(2)}\right)^H\right)$} \label{counterex}
${\bf V}^H{\bf V}$ can be viewed as the sample covariance matrix of the random vector $(1,e^{-j\omega},...,e^{-j(N-1)\omega})$. 
A similar interpretation of the convolution
$\left({\bf V}_{\omega_1}^{(1)} \left({\bf V}_{\omega_1}^{(1)}\right)^H + {\bf V}_{\omega_2}^{(2)} \left( {\bf V}_{\omega_2}^{(2)}\right)^H\right)$
is thus as a sample covariance matrix of a random vector of the same type, 
but where the phase distribution is $\omega_1$ parts of the time, and $\omega_2$ the rest of the time.
This convolution does not satisfy the requirement $\sigma\geq [0,1]_n$ from Theorem~\ref{deconv},
so there is no guarantee that the result only depends on the spectra of the input matrices.
It will be apparent from Theorem~\ref{teonew} below that the dependence is, indeed, on more than just these spectra:
Knowledge about the phase distributions is also required, and we will in fact interpret this convolution instead as an operation on phase distributions.

Consider first two independent Vandermonde matrices ${\bf V}_{\omega,c_1}^{(1)}$, ${\bf V}_{\omega,c_2}^{(2)}$
with an equal number of rows $N$ and with a common phase distribution $\omega$.
By stacking ${\bf V}_{\omega,c_1}^{(1)}$, ${\bf V}_{\omega,c_2}^{(2)}$ horizontally into one larger matrix,
it is straightforward to show that the distribution of
\begin{equation} \label{trivialresult}
  {\bf V}_{\omega,c_1}^{(1)} \left({\bf V}_{\omega,c_1}^{(1)}\right)^H + {\bf V}_{\omega,c_2}^{(2)} \left( {\bf V}_{\omega,c_2}^{(2)}\right)^H
\end{equation}
equals that of ${\bf V}_{\omega,c_1+c_2}{\bf V}_{\omega,c_1+c_2}^H$.
This case when the phase distributions are equal is therefore trivial.

When ${\bf V}_{\omega_1,c}^{(1)}$, ${\bf V}_{\omega_2,c}^{(2)}$ are independent with the same number of rows, but with different phase distributions, 
computing the distribution of
\begin{equation} \label{firstcasehere}
  {\bf V}_{\omega_1,c_1}^{(1)} \left({\bf V}_{\omega_1,c_1}^{(1)}\right)^H + {\bf V}_{\omega_2,c_2}^{(2)} \left( {\bf V}_{\omega_2,c_2}^{(2)}\right)^H
\end{equation}
seems, however, to be more complex. The following result explains that, at least in the limit, the situation is simpler. 
There the sum can be replaced by another Vandermonde matrix, whose phase distribution can be constructed in a particular way from the original ones:

\begin{theorem} \label{teonew}
  Let ${\bf V}_{\omega_1,c_1}$ and ${\bf V}_{\omega_2,c_2}$ be independent $N\times L_1$, $N\times L_2$ random Vandermonde matrices
  with phase distributions $\omega_1$, $\omega_2$, respectively, and with aspect ratios
  $c_1=\lim_{N\rightarrow\infty} \frac{L_1}{N}$, $c_2=\lim_{N\rightarrow\infty} \frac{L_2}{N}$, respectively.
  Then the limit distribution of
  \begin{equation} \label{secondtype}
    {\bf V}_{\omega_1,c_1} {\bf V}_{\omega_1,c_1}^H + {\bf V}_{\omega_2,c_2} {\bf V}_{\omega_2,c_2}^H
  \end{equation}
  equals that of
  \begin{equation} \label{alternative}
    {\bf V}_{\omega_1\ast_{c_1,c_2}\omega_2,c_1+c_2} {\bf V}_{\omega_1\ast\omega_2,c_1+c_2}^H,
  \end{equation}
  where $\omega_1\ast_{c_1,c_2}\omega_2$ denotes the phase distribution with density $\frac{1}{c_1+c_2}(c_1p_{\omega_1}+c_2p_{\omega_2})$,
  where $p_{\omega_1},p_{\omega_2}$ are the densities of the phase distributions $\omega_1,\omega_2$.
\end{theorem}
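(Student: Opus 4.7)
The natural approach is to compare moments $\frac{1}{N}\mathrm{Tr}((\cdot)^n)$ of the two $N \times N$ random matrices (\ref{secondtype}) and (\ref{alternative}) for every $n$. First, observe the algebraic identity
\[ {\bf V}_{\omega_1,c_1}{\bf V}_{\omega_1,c_1}^H + {\bf V}_{\omega_2,c_2}{\bf V}_{\omega_2,c_2}^H = {\bf U}{\bf U}^H, \]
where ${\bf U} := [{\bf V}_{\omega_1,c_1}\;\;{\bf V}_{\omega_2,c_2}]$ is the $N \times (L_1+L_2)$ horizontal stack. Thus ${\bf U}$ differs from a genuine Vandermonde matrix ${\bf V}_{\omega_1\ast\omega_2,\,c_1+c_2}$ only in that its column phases are not identically distributed: the first $L_1$ columns have phases drawn from $\omega_1$ and the remaining $L_2$ from $\omega_2$, instead of all $L_1+L_2$ being i.i.d.\ from the mixture $\omega_1\ast_{c_1,c_2}\omega_2$. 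The goal is to show that this distinction washes out in the moments.

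Next, I would expand $\mathrm{Tr}(({\bf U}{\bf U}^H)^n)$ as a sum over row indices $i_1,\ldots,i_n$ and column indices $k_1,\ldots,k_n$, take expectations, and group the $k$-indices by their equality partition $\rho \in {\cal P}(n)$ so that the expectation factorizes over blocks. A block $W_s$ with $\alpha_s := \sum_{l \in W_s}(i_l - i_{l+1})$ contributes a characteristic-function factor $\phi_{\omega_a}(\alpha_s)=\int e^{-j\alpha_s x}p_{\omega_a}(x)\,dx$, where $a\in\{1,2\}$ is determined by which half of ${\bf U}$'s columns the shared $k$-value lies in. Summing over this choice, each block contributes $L_1\phi_{\omega_1}(\alpha_s) + L_2\phi_{\omega_2}(\alpha_s)$ to leading order in $N$. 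Performing the same expansion on the right-hand side of the theorem, each block contributes $(L_1+L_2)\phi_{\omega_1\ast\omega_2}(\alpha_s)$. The two expressions coincide because the characteristic function of the mixture is linear in the density:
\[ \phi_{\omega_1\ast_{c_1,c_2}\omega_2}(\alpha) = \frac{c_1\phi_{\omega_1}(\alpha) + c_2\phi_{\omega_2}(\alpha)}{c_1+c_2}, \]
so both block weights equal $N\bigl(c_1\phi_{\omega_1}(\alpha_s) + c_2\phi_{\omega_2}(\alpha_s)\bigr)$. Since the remaining factors in the moment expansion -- the sum over the row multi-indices and the overall normalization by $N^n$ -- are identical on the two sides, all moments will agree in the large-$N$ limit.

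The main obstacle is the usual one from the proof of Theorem~\ref{deconv}: controlling the subleading terms coming from the constraint that $k$-indices across distinct blocks of $\rho$ be different, and evaluating the sum over row multi-indices through the Dirichlet-kernel / Riemann-sum asymptotics that decide which partitions survive in the limit. Fortunately these steps are structurally identical on the two sides being compared -- only the per-block factor differs, and, as shown above, that factor agrees -- so the combinatorial and analytic machinery of Theorem~\ref{deconv} transfers essentially verbatim, with $p_{\omega_1\ast\omega_2}$ playing on the RHS the role of the single phase density. Identifying the surviving partitions on each side then yields equality of all moments, hence of the limit spectral distributions.
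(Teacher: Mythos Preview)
Your approach is correct and essentially equivalent to the paper's proof, though packaged differently. The paper starts from the mixed-moment formula for products ${\bf V}_{i_1}{\bf V}_{i_1}^H\cdots{\bf V}_{i_n}{\bf V}_{i_n}^H$ (Theorem~7 of~\cite{ryandebbah:vandermonde1}), sums over all choices $(i_1,\ldots,i_n)\in\{1,2\}^n$, and then collapses the sum over $\sigma\geq\rho$ via the multinomial identity $\sum_{\sigma\geq\rho}\prod_i(c_ip_{\omega_i})^{|\rho\cap\sigma_i|}=(c_1p_{\omega_1}+c_2p_{\omega_2})^{|\rho|}$ under the integral sign, recognizing the result as the moment formula for a single Vandermonde matrix with the mixture density. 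Your stacking ${\bf U}=[{\bf V}_1\;{\bf V}_2]$ and subsequent sum over which half of the column range each block's shared index falls in is exactly the same computation --- choosing $k_s\le L_1$ versus $k_s>L_1$ is the same as choosing $i_s=1$ versus $i_s=2$ --- and your per-block identity $L_1\phi_{\omega_1}+L_2\phi_{\omega_2}=(L_1+L_2)\phi_{\omega_1\ast\omega_2}$ is the characteristic-function dual of the paper's density identity. Your framing is more self-contained, since it does not invoke the reference's Theorems~7--8, at the cost of having to redo (or at least gesture at) the Dirichlet-kernel asymptotics those theorems already encapsulate; the paper's route is shorter precisely because it outsources that step. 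Amusingly, the paper remarks just before the theorem that ``the strategy of stacking the Vandermonde matrices \ldots\ will not work'', but this refers to the naive finite-$N$ identification of ${\bf U}$ with a genuine Vandermonde matrix (which indeed fails, as the paper's $N=L=2$ example shows); your argument confirms that stacking \emph{does} work once one tracks the non-i.i.d.\ column structure through the moment asymptotics.
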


The proof of Theorem~\ref{teonew} can be found in Appendix~\ref{appendixteonew}.
The result is only asymptotic, meaning that the mean eigenvalue distribution for finite $N$ of the two mentioned matrices are in fact different.
This can be seen by setting $L=N=2$, and observing that the distribution of
$\frac{1}{2}\left(e^{j\omega_1}+e^{j\omega_2}\right)$ is in general different from that of $e^{\omega_1\ast_{1,1}\omega_2}$.
No trivial proof for Theorem~\ref{teonew} is thus known, since the strategy of stacking the Vandermonde matrices
(from the reasoning for (\ref{trivialresult})) will not work.

Theorem~\ref{teonew} says that one depends on knowledge about the phase distributions for Convolution 4).
To verify this, set $\omega_1$ and $\omega_2$ equal to the uniform distributions on $[0,\pi)$,
and then change $\omega_2$ to the uniform distribution on $[\pi,2\pi)$.
The phase distributions here give the same moments (since they are shifted versions).
However, the two versions of $\frac{1}{2}(p_{\omega_1}+p_{\omega_2})$ give phase distributions with different moments, since we get the
uniform distribution on $[0,\pi)$ in the first case, and the uniform distribution on $[0,2\pi)$ in the second case: 
the moments of these are different, since the uniform distribution on $[0,2\pi)$
minimizes the moments of Vandermonde matrices~\cite{ryandebbah:vandermonde1}.
For the same reason, Theorem~\ref{teonew} says that the moments of (\ref{secondtype}) are minimized when
$\omega_1\ast_{c_1,c_2}\omega_2$ equals the uniform distribution.

\subsection{Hankel and Toeplitz matrices}
\cite{ryandebbah:vandermonde1} states that the moments of ${\bf V}^H{\bf V}$ can be expressed in terms of volumes of certain convex polytopes.
It turns out that the moments of Hankel, Markov and Toeplitz matrices can be expressed in terms of a subset of these polytopes~\cite{paper:brycdembojiang},
so that we can use the same strategy to compute the moments of these matrices also.
The proof of the following theorem relating to the moments of Toeplitz matrices is therefore explained in Appendix~\ref{appendixteo012}.

\begin{theorem} \label{teo3}
Define the Toeplitz matrix
\[
  {\bf T}_n = \frac{1}{\sqrt{n}} \left( \begin{array}{cccccc} X_0 & X_1 & X_2 & \cdots & X_{n-2} & X_{n-1} \\
                                           X_1 & X_0 & X_1 &        &         & X_{n-2} \\
                                           X_2 & X_1 & X_0 &        & \ddots  & \vdots \\
                                           \vdots &  &     & \ddots &         & X_2 \\
                                           X_{n-2} & &     &        & X_0     & X_1 \\
                                           X_{n-1} & X_{n-2} & \hdots & X_2 & X_1 & X_0 \end{array} \right),
\]
where $X_i$ are i.i.d., real-valued random variables with variance $1$.
Let $M_i$ be the $2i$'th asymptotic moment of ${\bf T}_n$ (the odd moments vanish). These moments are given by
\begin{eqnarray*}
  M_{1} &=& 1\\
  M_{2} &=& \frac{8}{3}\\
  M_{3} &=& 11\\
  M_{4} &=& \frac{1435}{24}
\end{eqnarray*}

\end{theorem}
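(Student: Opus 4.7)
The plan is to combine the classical moment method for symmetric random Toeplitz matrices with the polytope-volume machinery already in place for Vandermonde matrices. First I would expand
\[
  E[\mathrm{tr}(T_n^{2k})] = \frac{1}{n^{k+1}} \sum_{i_1,\ldots,i_{2k}} E\Bigl[\prod_{j=1}^{2k} X_{|i_j - i_{j+1}|}\Bigr],
\]
indices taken cyclically with $i_{2k+1}=i_1$. Since the $X_\ell$ are i.i.d.\ with variance $1$ (and, as is implicit for the odd moments to vanish, zero mean), the only index tuples contributing in the limit are those for which the multiset $\{|i_j-i_{j+1}|\}_{j=1}^{2k}$ splits into pairs of equal values; the surviving terms are therefore indexed by pair partitions $\pi$ of $\{1,\dots,2k\}$, which also explains why the odd moments vanish.

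Next, for each pair partition $\pi$, I would reparameterize $i_j = \lfloor n t_j\rfloor$ with $t_j\in[0,1]$. The constraints $|i_{j_1}-i_{j_1+1}|=|i_{j_2}-i_{j_2+1}|$ for each block $\{j_1,j_2\}\in\pi$, together with a choice of sign for each pair, become a system of linear equations in the $t_j$, cutting out a convex polytope $P_\pi\subset[0,1]^{2k}$. A standard Riemann-sum argument (exactly as in Bryc--Dembo--Jiang) then gives
\[
  M_k \;=\; \lim_{n\to\infty} E[\mathrm{tr}(T_n^{2k})] \;=\; \sum_{\pi\in\mathcal{P}_2(2k)} \sum_{\varepsilon} \mathrm{Vol}(P_{\pi,\varepsilon}),
\]
where $\varepsilon$ runs over admissible sign assignments to the blocks of $\pi$. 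The key observation, due to Bryc--Dembo--Jiang and cited in the excerpt, is that these Toeplitz polytopes form a distinguished subfamily of the polytopes whose volumes compute the Vandermonde moments $\lim_{N\to\infty} E[\mathrm{tr}((V^H V)^k)]$ from~\cite{ryandebbah:vandermonde1}; the Toeplitz case simply restricts both the admissible partitions and the signed-difference conventions.

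Having identified the polytope family, I would invoke the Fourier--Motzkin elimination routine of Section~\ref{theoremimpl}, which was designed to evaluate exactly this kind of polytope volume for the Vandermonde convolutions, applied to the Toeplitz subfamily. Running it for $k=1,2,3,4$ then produces the rational values $1,\,8/3,\,11,\,1435/24$ claimed in the statement, in exactly the same way that the implementation produced the numerical coefficients in Theorems~\ref{teo0}--\ref{teo2}.

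The main obstacle is the bookkeeping in the translation step: one must match the orientation of the signed differences $i_j-i_{j+1}$ appearing in the Toeplitz sum with the orientation of the phase parameters used inside the Vandermonde code, so that the Toeplitz polytopes are correctly identified as a subset of the Vandermonde polytopes and are enumerated with the right multiplicities. Once this combinatorial dictionary is in place, the numerical verification of the four stated values is a routine invocation of the existing implementation.
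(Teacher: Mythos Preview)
Your proposal is correct and follows essentially the same route as the paper: reduce via the moment method (citing Bryc--Dembo--Jiang) to volumes of polytopes indexed by pair partitions, observe that these are precisely the Vandermonde equation systems~(\ref{volumeeq}) restricted to partitions whose blocks all have cardinality two, and then feed those systems into the Fourier--Motzkin routine of Section~\ref{theoremimpl}. The paper's own proof is in fact terser than yours---it simply records the restriction to two-element blocks and defers the numerics to the implementation---so your expanded account of the trace expansion and the reparameterization $i_j=\lfloor nt_j\rfloor$ is a welcome elaboration rather than a deviation; the one place to be careful, as you note, is that the signed differences collapse to the single constraint $\sum_{k\in W}x_{k-1}=\sum_{k\in W}x_k$ for each two-element block $W$, so the sum over $\varepsilon$ is already absorbed into the form~(\ref{volumeeq}).
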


A similar result for Hankel matrices also holds:
\begin{theorem} \label{teohankel}
Define the Hankel matrix
\[
  {\bf H}_n = \frac{1}{\sqrt{n}} \left( \begin{array}{cccccc}
                                           X_1 & X_2 & \cdots & \cdots & X_{n-1} & X_n \\
                                           X_2 & X_3 &        &        & X_n     & X_{n+1} \\
                                           \vdots &  &        &        & X_{n+1} & X_{n+2} \\
                                                  &  &        & \ddots &         & \\
                                           X_{n-2} & X_{n-1} & &       &         & \vdots \\
                                           X_{n-1} & X_n & & & X_{2n-3} & X_{2n-2} \\
                                           X_n & X_{n+1} & \cdots & \cdots & X_{2n-2} & X_{2n-1}.
                                        \end{array} \right),
\]
where $X_i$ are i.i.d., real-valued random variables with variance $1$.
Let $M_i$ be the $2i$'th asymptotic moment of ${\bf H}_n$ (the odd moments vanish). These moments are given by
\begin{eqnarray*}
  M_{1} &=& 1\\
  M_{2} &=& \frac{8}{3}\\
  M_{3} &=& 14\\
  M_{4} &=& 100
\end{eqnarray*}

\end{theorem}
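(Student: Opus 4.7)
The plan is to reduce the computation to a sum of volumes of convex polytopes, exactly mirroring the strategy used in~\cite{ryandebbah:vandermonde1} for Vandermonde matrices, and then to invoke the software implementation on the resulting subset of polytopes. First I would expand the normalized trace as
\[
  \mathrm{tr}\bigl(({\bf H}_n/\sqrt{n})^{2k}\bigr)
   = \frac{1}{n^{k+1}} \sum_{i_1,\ldots,i_{2k}=1}^{n} X_{i_1+i_2-1}X_{i_2+i_3-1}\cdots X_{i_{2k}+i_1-1},
\]
and take expectations. Since the $X_i$ are i.i.d.\ with unit variance, the classical moment-method argument shows that in the large-$n$ limit only index tuples contribute for which the values $i_a+i_{a+1}-1$ (indices mod $2k$) pair up exactly, i.e.\ for which there exists a pair partition $\pi\in{\cal P}(2k)$ such that $i_a+i_{a+1}=i_b+i_{b+1}$ whenever $\{a,b\}\in\pi$. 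Pair partitions with any block of size $\neq 2$ contribute $o(1)$ and can be dropped.

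Next, for each such $\pi$, counting asymptotically the number of integer tuples in $\{1,\ldots,n\}^{2k}$ satisfying these linear constraints gives, after the $n^{-(k+1)}$ normalization, the Lebesgue volume of the polytope
\[
  P_\pi = \Bigl\{(x_1,\ldots,x_{2k})\in[0,1]^{2k}: x_a+x_{a+1}=x_b+x_{b+1}\ \text{for all}\ \{a,b\}\in\pi\Bigr\},
\]
indices taken cyclically. This is exactly the Bryc--Dembo--Jiang identification~\cite{paper:brycdembojiang}, and each $P_\pi$ is seen to be a sub-polytope of the family that arises in computing the moments of ${\bf V}^H{\bf V}$ in~\cite{ryandebbah:vandermonde1}. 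Consequently $M_k = \sum_{\pi} \mathrm{vol}(P_\pi)$, where the sum ranges over the pair partitions of $\{1,\ldots,2k\}$.

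Finally, I would run the Fourier--Motzkin--based implementation described in Section~\ref{theoremimpl} on this restricted family of polytopes, for $k=1,2,3,4$, and read off the four numerical values claimed in the statement. The main obstacle is neither the moment reduction nor the volume computation (both are routine once the algorithm is in place), but rather ensuring that the code iterates over exactly the right subset of partitions: unlike the Toeplitz case (where the constraints involve signed differences $i_a-i_{a+1}=\pm(i_b-i_{b+1})$, splitting each pair partition into several signed variants), the Hankel constraints are plain sums, so only the unsigned pair partitions survive. Correctly plugging this constraint system into the partition-iteration and Fourier--Motzkin modules is the essential step; the remainder is a computer verification that the summed polytope volumes produce $1,\ 8/3,\ 14,\ 100$.
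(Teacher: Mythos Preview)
Your proposal is correct and follows essentially the same route as the paper: both reduce the Hankel moments via the Bryc--Dembo--Jiang argument to a sum of polytope volumes indexed by pair partitions, observe that the Hankel constraints differ from the Toeplitz ones in how the variables enter the linear system (the paper phrases this as ``the variables in (\ref{volumeeq}) are placed differently on the left and right sides''; you phrase it as sums versus signed differences), and then feed the resulting restricted family of equation systems into the Fourier--Motzkin module of Section~\ref{theoremimpl} to obtain the four numerical values.
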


Similar results can also be written down for Markov matrices, but these expressions are skipped.
It seems that expressions for the joint distribution of Hankel and Toeplitz matrices and matrices ${\bf D}(N)$ on the same form as before do not exist, 
meaning that the mixed moments may not exist, or that they depend on more than the spectra of the component matrices. 
The details of this are also skipped.

\subsection{Generalizations to almost sure convergence}
Up to now, we have only shown convergence in distribution for the different convolutions and mixed moments.
The same results also hold when we replace convergence in distribution with almost sure convergence in distribution.
We summarize this in the following result:

\begin{theorem} \label{almostsuretheorem}
Assume that the matrices ${\bf D}_i(N)$ have a joint limit distribution as $N\rightarrow\infty$,
and that ${\bf V}_1,{\bf V}_2,...$ are independent, with continuous phase distributions.
Any combination of matrices on the form (\ref{moregeneral}) converges almost surely in distribution, whenever the matrix product is well-defined and square.
\end{theorem}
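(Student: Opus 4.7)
The plan is to promote convergence in expectation (already established by Theorem~\ref{deconv}) to almost sure convergence by a standard variance estimate combined with the Borel--Cantelli lemma. Write $T_N = \mathrm{tr}(P_N)$, where $P_N$ is the matrix product in (\ref{moregeneral}). Since $\E[T_N]$ already converges to the claimed limit, it suffices to prove the concentration estimate $\mathrm{Var}(T_N) = O(1/N^2)$; then Chebyshev's inequality gives $\PP(|T_N - \E[T_N]| > \epsilon) = O(1/N^2)$, which is summable, and Borel--Cantelli yields $T_N \to C$ almost surely. The simultaneous statement for all mixed moments then follows because a countable intersection of almost sure events is almost sure (and polynomial identities in the countably many moments extend by continuity).

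The core of the argument, and the main obstacle, is the variance bound. I would compute $\E[T_N^2]$ by the same partition-indexed expansion used in the proof of Theorem~\ref{deconv}, but applied to a product of \emph{two} traces rather than one. Concretely, $T_N^2$ expands as a sum over pairs of index cycles $(\mathbf{j}^{(1)}, \mathbf{j}^{(2)})$, and the expectation over the phases factors according to a partition $\rho$ of the combined index set of size $4n$ (where $2n$ indices come from each trace copy). The $\E[T_N]\,\E[T_N]$ term corresponds to those $\rho$ that refine the partition separating the two copies, i.e.\ whose blocks never straddle the two cycles. The variance is thus exactly the sum over $\rho$ with at least one \emph{cross block} linking the two copies.

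The key combinatorial input, which mirrors the dominating-partition analysis of~\cite{ryandebbah:vandermonde1}, is that any such cross block costs a factor of $1/N$ relative to the factorized contribution. This is because the Vandermonde phase integrals (\ref{ikdef}) force the identification of at least one additional free summation index whenever a block crosses between the two cycles, reducing the order in $N$ by one; an additional factor of $1/N$ arises from the normalized trace. Summing over cross-linked $\rho$, which are finitely many for fixed $n$, gives $\mathrm{Var}(T_N) = O(1/N^2)$. The deterministic matrices ${\bf D}_i(N)$ only supply uniformly bounded factors $D_\rho$ because their joint limit distribution exists, so they do not affect the order in $N$.

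The remaining step is technical but routine: extend the Fourier--Motzkin polytope-volume estimates of~\cite{ryandebbah:vandermonde1} from one cycle to two cycles to verify that each cross block indeed reduces the polytope dimension by one, which yields the $1/N$ gain. Once the variance bound is in place, Borel--Cantelli applied along the sequence $N \in \N$ gives almost sure convergence of each fixed mixed moment, and intersecting over the countable family of all mixed moments (indexed by $n$, the choice of Vandermonde and deterministic factors, and the ordering) produces an almost sure event on which every mixed moment of the form (\ref{moregeneral}) converges to the value predicted by Theorem~\ref{deconv}.
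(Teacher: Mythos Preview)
Your overall strategy---Chebyshev plus Borel--Cantelli---is exactly what the paper uses, but there is a genuine gap in the concentration estimate. The variance of $T_N$ for Vandermonde matrices is only $O(1/N)$, not $O(1/N^2)$, and $\sum 1/N$ diverges, so Borel--Cantelli does not apply. To see why, follow your own expansion: with two copies of the trace the partition $\pi$ lives in ${\cal P}(2n)$, and a cross block merges the two connected components of the equation system into one. That raises the rank of the coefficient matrix by one, so the number of solutions $(i_1,\dots,i_{2n})$ drops by exactly one power of $N$, giving
\[
  N^{-2n}\,L^{-2}\,L^{|\pi|}\,N^{2n-|\pi|+1}=O(L^{-1}).
\]
Your claimed ``additional factor of $1/N$ from the normalized trace'' is not additional: the $L^{-2}$ from the two traces is already present in both $\E[T_N^2]$ and $(\E[T_N])^2$, so it cancels in the variance. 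This is where Vandermonde matrices differ from Gaussian matrices (for which the variance really is $O(1/N^2)$); the paper explicitly notes that Vandermonde matrices converge more slowly than Gaussian ones.

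The paper's fix is to bound the \emph{fourth} centered moment instead, establishing
\[
  \E\bigl[(T_N-\E[T_N])^4\bigr]\le C_rL^{-3},
\]
which is summable. This requires a more delicate cancellation argument: one expands the fourth moment over partitions of $\{1,\dots,4n\}$, classified by which of the four trace-copies they connect (sets ${\cal P}_0,{\cal P}_{i,j},{\cal P}_{i,j,k},{\cal P}_{1,2,3,4}$). The alternating coefficients $1,-4,6,-3$ arising from the binomial expansion of $(T_N-\E[T_N])^4$, together with symmetry among the four copies, force all contributions except those from ${\cal P}_{1,2,3,4}$ to cancel. For $\pi\in{\cal P}_{1,2,3,4}$ the equation system has a single connected component rather than four, yielding a gain of $N^{-3}$. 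Markov's inequality with the fourth moment then gives a summable tail bound, and Borel--Cantelli finishes as you outlined.
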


The proof of Theorem~\ref{almostsuretheorem} can be found in Appendix~\ref{appendixteoas}.
In particular, the matrices we have considered in our convolution operations, such as
${\bf V}_1^H{\bf V}_1{\bf V}_2^H{\bf V}_2$, ${\bf V}_1^H{\bf V}_1 + {\bf V}_2^H{\bf V}_2$,
all converge almost surely in distribution.

\subsection{Generalized Vandermonde matrices}
We have not considered generalized Vandermonde matrices up to now,
i.e. matrices were the columns in ${\bf V}$ are not uniform distributions of powers~\cite{norberg,ryandebbah:vandermonde1}.
Although similar results can also be stated for these matrices, we only explain how they will differ.

In case of uniform power distribution, the column sum of (\ref{vandermonde}) is
\begin{equation} \label{uniformpower}
  \frac{1-e^{j Nx}}{1-e^{j x}},
\end{equation}
and this is substituted into the integrand of the expression defining the Vandermonde mixed moment expansion coefficients (see Appendix~\ref{deconvproof}).
For generalized Vandermonde matrices, one can also define these coefficients~\cite{ryandebbah:vandermonde1},
the difference being that one replaces the sum of the powers (\ref{uniformpower}) with a different function, 
and requires that the function has the property proved in Lemma~\ref{appendixlemma} in Appendix~\ref{deconvproof}. 
The details for computing the mixed moments (\ref{moregeneral2}) go otherwise the same way as the expressions in Appendix~\ref{deconvproof},
with the exception that we have different values for the Vandermonde mixed moment expansion coefficients.
However, the integrals defining these coefficients may be hard to compute for a non-uniform power distribution, even for the case of uniform
phase distribution, since Fourier-Motzkin elimination (see Section~\ref{theoremimpl}) can be applied only in the case of uniform power- and phase distribution.

We conjecture that Theorem~\ref{teonew} holds also for general power distributions. It is likely that a similar calculation as in Appendix~\ref{appendixteonew}
can prove this, but we do not go into details on this.

\section{Software implementation} \label{theoremimpl}
In this section, we will repeatedly refer to the implementation~\cite{eurecom:vandermondeimpl2}, which contains all code needed to verify all results in this paper. 
Implementations therein have two purposes: 
\begin{enumerate}
  \item to generate the exact coefficients in the formulas in this paper (generated directly in latex),
  \item to compute the convolution with a given set of moments numerically. 
\end{enumerate}
In Appendix~\ref{deconvproof}, we explain why iteration through partitions and Fourier-Motzkin elimination are two main things needed in the implementation. 
In this section, we will explain how these tasks can be implemented efficiently. 

\subsection{Reducing the complexity in iterating over partitions}
Formulas in~\cite{ryandebbah:vandermonde1} and in this paper sum over sets of partitions. 
Iterating over partitions is very time-consuming, and must therefore be performed efficiently. 
There are several ways how this 
can be performed\footnote{The implementation in this paper uses an implementation~\cite{eurecom:vandermondeimpl2} 
which lists all partitions of $n$ elements with a given number of blocks}.
It turns out that one can reduce the number of partitions needed for computations considerably. 

Assume that ${\bf V}$ has uniform phase distribution, and consider
\begin{equation} \label{standardform}
  \mathrm{tr}\left( {\bf V}^H{\bf V}\cdots {\bf V}^H{\bf V} \right).
\end{equation}
To compute (\ref{standardform}), we traverse all partitions. For each partition an equation system is constructed, 
and the partition contributes with the volume of the corresponding solution set to the equation system in (\ref{standardform}).
The following observations~\cite{paper:nordio1,ryandebbah:vandermonde1} simplifies this computation:
\begin{itemize}
  \item If a block is a singleton, then the corresponding volume is the same as that of the partition with that block removed. 
    By using this observation repeatedly, we obtain that any noncrossing partition gives 1 in volume contribution.
  \item If a block contains two successive elements, then the corresponding volume is the same as that of the partition with any one of the two elements removed
  \item If a partition is a cyclic shift of another, then the corresponding volumes are the same.
\end{itemize}
These observations can reduce the number of the computations dramatically. 
To make precise how these observations can be used, we state two definitions:
\begin{definition}
A partition $\pi$ is said to be alternating if $i$ and $i+1$ (where the sum is taken cyclically mod $n$) are in different blocks for all $i$, 
and no blocks in $\pi$ are singletons. 
The alternating partition obtained by removing all singleton blocks and all successive elements in all blocks incrementally 
is called the standard form of the partition.
The set of alternating partitions of $\{ 1,...,n\}$ with $k$ blocks is denoted ${\cal A}(n,k)$.
\end{definition}
\begin{definition} \label{eqvdef}
We say that two partitions are equivalent whenever one is a cyclic shift (with a fixed number of elements) of the other.
\end{definition}

Note that the standard form of any noncrossing partitions is the empty partition. 
The first two observations above say that computations only need to be performed for alternating partitions
(since any partition can be reduced to an alternating one in standard form), 
while the third observation says that computations are only needed for one representative in each equivalence class, 
with equivalence defined as in Definition~\ref{eqvdef}. 
For instance, there are $678570$ partitions of $\{ 1,...,11\}$. 
The number of alternating partitions of the same set is $4427$. 
The number of equivalence classes of alternating partitions is $715$. 

The moments of Vandermonde matrices can thus be computed by iterating over the smaller set of cyclic equivalence classes of alternating partitions. 
This iteration can be accomplished with a 
computer program\footnote{It is not obvious how the observations can be applied in an efficient implementation. The implementation~\cite{eurecom:vandermondeimpl2}
first generates all partitions, and then picks out those which have the alternating property and no singleton blocks}. 
We also need to keep track of the size of each equivalence class of alternating partitions. 
This is done by a program which efficiently hashes all partitions. 
This is a computationally intensive process, but which needs to be done only once for the required number of moments.

\subsection{Constructing linear equation systems}
For a partition $\rho=\{ W_1,...,W_r\}\in{\cal P}(n)$,~\cite{ryandebbah:vandermonde1} relates (\ref{standardform}) 
to the corresponding volume of the solution set of the equations 
\begin{eqnarray}
  \sum_{k\in W_1} x_{k-1} &=& \sum_{k\in W_1} x_k \nonumber \\
  \sum_{k\in W_2} x_{k-1} &=& \sum_{k\in W_2} x_k \nonumber \\
  \vdots                  & & \vdots \nonumber \\
  \sum_{k\in W_r} x_{k-1} &=& \sum_{k\in W_r} x_k, \label{volumeeq}
\end{eqnarray}
where all variables are constrained to lie between $0$ and $1$. 
$\rho$ reflects how the $\omega_i$ are grouped into independent sets of variables:
The left sides in (\ref{volumeeq}) represent the ${\bf V}^H$-terms in the entries of the matrix product ${\bf V}^H{\bf V}$, 
whereas the right sides represent the ${\bf V}$-terms in the same matrix product. 
Equations of the form (\ref{volumeeq}) also apply to the more general form~\cite{ryandebbah:vandermonde1}
\begin{equation} \label{standardform2}
  \mathrm{tr}\left( {\bf V}_{i_1}{\bf V}_{i_1}^H\cdots {\bf V}_{i_k}{\bf V}_{i_k}^H \right),
\end{equation}
where ${\bf V}_1,{\bf V}_2,...$ are independent and with uniform phase distribution.

In Appendix~\ref{deconvproof}, it is shown that in order to express the arbitrary mixed moments of Vandermonde matrices (independent or not), 
we need to solve systems similar to (\ref{volumeeq}), with the difference that different number of variables may appear on the left and right hand sides. 
Note that the volume of the solution set of (\ref{volumeeq}) is always a rational number. 
This enables our implementation to generate exact formulas. 
For Toeplitz and Hankel matrices, it turns out that a subset of these equation systems serve the same role in order to compute their moments.

\subsection{Solving the linear equation systems}
In all cases of Toeplitz, Hankel, and Vandermonde, the coefficient matrix of the equations we construct has rank $r-1$ ($r$ being the number of blocks), 
and we need to find the number of solutions. Since we also have the constraints that $0\leq x_i\leq 1$, 
this really corresponds to finding all solutions to a set of linear inequalities. 
A much preferred method for doing so is Fourier-Motzkin elimination~\cite{paper:dahl1}. 
The first step before we perform this elimination would be to bring the equations into a standard form.
We do this by expressing the $r-1$ pivot variables (after row reduction) by means of the free variables. 
Since all variables are between $0$ and $1$ (which are split into two inequalities), our equations are
\begin{equation} \label{likninger0}
\begin{array}{rrr}
  \sum_{j=1}^{n-r+1} a_{1j}x_j      &\leq& 1 \\
  \sum_{j=1}^{n-r+1} -a_{1j}x_j     &\leq& 0 \\
  \sum_{j=1}^{n-r+1} a_{2j}x_j      &\leq& 1 \\
  \sum_{j=1}^{n-r+1} -a_{2j}x_j     &\leq& 0 \\
  \vdots                            &    & \vdots \\
  \sum_{j=1}^{n-r+1} a_{(r-1)j}x_j  &\leq& 1 \\
  \sum_{j=1}^{n-r+1} -a_{(r-1)j}x_j &\leq& 0 \\
   x_1                              &\leq& 1 \\
  -x_1                              &\leq& 0 \\
   x_2                              &\leq& 1 \\
  -x_2                              &\leq& 0 \\
  \vdots                            &    & \vdots \\
  x_{n-r+1}                         &\leq& 1 \\
  -x_{n-r+1}                        &\leq& 0,
\end{array}
\end{equation}
where we have re-indexed the variables so that $x_1,...,x_{n-r+1}$ are the free variables, $x_{n-r+2},...,x_n$ are the pivot variables. 
The coefficients $a_{ij}$ are taken from $-1,0,1$, and are the coefficients we obtain when the pivot variables are expressed in terms of the free variables. 
By reordering the equations, we get what we call the standard form (where the equations are sorted by the first coefficient):
\begin{equation} \label{likninger}
\begin{array}{rrrr}
   x_1    & + \sum_{j=1}^{n-r} b_{1j}x_{j+1}   &\leq& e_1 \\
   \vdots & \vdots                             &    & \vdots \\
   x_1    & + \sum_{j=1}^{n-r} b_{r_1j}x_{j+1} &\leq& e_{r_1} \\
          & \sum_{j=1}^{n-r} c_{1j}x_{j+1}     &\leq& f_1 \\
   \vdots & \vdots                             &    & \\
          & \sum_{j=1}^{n-r} c_{r_2j}x_{j+1}   &\leq& f_{r_2} \\
   -x_1   & + \sum_{j=1}^{n-r} d_{1j}x_{j+1}   &\leq& g_1\\
   \vdots & \vdots                             &    & \vdots \\
   -x_1   & + \sum_{j=1}^{n-r} d_{r_3j}x_{j+1} &\leq& g_{r_3}.
\end{array}
\end{equation}
Fourier-Motzkin elimination now consists of eliminating the first variable, and working on the remaining equations to eliminate variables iteratively. 
Most of the coefficient matrices here are {\em combinatorial matrices} on the same form as those in~\cite{paper:dahl1}. 

Fourier-Motzkin elimination is computationally intensive, in the sense that the number of inequalities grow rapidly during elimination. 
Our aim is to compute the volume of the solution set rather than finding specific solutions. 
The volume can be split into many smaller disjoint parts, 
each part corresponds to a choice of minimum ({\tt min}) for the first equations, and a choice of maximum ({\tt max}) for the last equations
Each part corresponds to the solution of a set of equations with one less variable. 
More precisely, let the equations in (\ref{likninger}) have coefficient vectors $B_1,...,B_{r_1},C_1,...,C_{r_2},D_1,...,D_{r_3}$, so that 
\[
\begin{array}{rrrl}
  B_i &=& (1 ,b_{i1},...,b_{i(n-r)},e_i) & 1\leq i\leq r_1 \\
  C_i &=& (0 ,c_{i1},...,c_{i(n-r)},f_i) & 1\leq i\leq r_2 \\
  D_i &=& (-1,d_{i1},...,d_{i(n-r)},g_i) & 1\leq i\leq r_3.
\end{array}
\]
Each choice of {\tt min}, {\tt max}, with $1\leq \mbox{\tt min}\leq r_1$, $1\leq \mbox{\tt max}\leq r_3$ 
gives rise to a volume described by the solution to the set of equations
\[
\begin{array}{lll}
B_k-B_{\mbox{\tt min}}                & 1\leq k\leq r_1 & k\neq \mbox{\tt min} \\
D_k-D_{\mbox{\tt max}}                & 1\leq k\leq r_3 & k\neq \mbox{\tt max} \\
D_{\mbox{\tt max}}+B_{\mbox{\tt min}} &                 & \\
C_k                                   & 1\leq k\leq r_2,& \\
\end{array}
\]
where the equations are described by row vectors as above. 
There are $r_1-1+r_3-1+1+r_2=r_1+r_2+r_3-1$ equations here, which is one less equation than what we started with. 
Note that the first element is zero in all these equations, so that the first column can be removed in the coefficient matrix. 
Therefore, the original system has been reduced to one with one equation less and one less variable. 
There may be more zero leading columns also, and all these can be removed. When the leading column is nonzero, the rows are sorted so that 
we get a new system on the form (\ref{likninger}), and the procedure continues. 
In the process, the choice of {\tt max} and {\tt min} have decided the lower and upper integral bounds for the $x_1$-variable. 
These are stored, and after all Fourier-Motzkin elimination we have a full set of integral bounds, 
and the corresponding volume is computed by integrating over these bounds. 
This can be implemented easily~\cite{eurecom:vandermondeimpl2}, since integration over a volume with integral bounds 
which are linear in the variables can be defined in terms of simple row operations and integration by parts. 

\subsection{Optimizations for Fourier-Motzkin elimination}
The challenge in computing the volumes of the solution sets in Fourier-Motzkin elimination lies in
that there are many eliminations which need to be peformed, and we do this for every partition in a large set of partitions. 
The Fourier-Motzkin elimination steps themselves can be stored and reused, 
but this of little help since we have to keep track of the corresponding integral bounds for the solution sets. 
There are however, a couple of optimizations which can be used during elimination:
\begin{itemize}
  \item if both row and the negative of that row are present as an equation, the solution set is empty, so that we can stop elimination
  \item Duplicate rows can be deleted
  \item Rows where only the last elements differ can be merged. 
\end{itemize}

\section{Simulations} \label{simulations}
Results in this paper have been concerned with finding the spectral limit distribution from those of the input matrices.
However, in practice, one has a certain model where one or more parameters are unknown, one observes output from that model,
and would like to infer on the parameters of the model. The strengths in the results of this paper lie in that this
kind of "deconvolution" is made possible to infer on the parameters of various models. As an example,
\begin{enumerate}
  \item From observations of the form ${\bf D}(N){\bf V}^H{\bf V}$ or ${\bf D}(N)+{\bf V}^H{\bf V}$,
    one can infer on either the spectrum of ${\bf D}(N)$, or the spectrum or phase distribution of ${\bf V}$, when exactly one of these is unknown.
  \item From observations of the form ${\bf V}_1^H{\bf V}_1{\bf V}_2^H{\bf V}_2$ or ${\bf V}_1^H{\bf V}_1+{\bf V}_2^H{\bf V}_2$, one can infer on
    the spectrum or phase distribution of one of the Vandermonde matrices, when one of the Vandermonde matrices is known.
\end{enumerate}
Moreover, the complexity in this inference is dictated by the number of moments considered.
We do not go into depths on all the different types of deconvolutions made possible, 
only sketch a very simple example of inference as in 1). 
The other types of deconvolution go similarly, since the implementation supports each of them through functions with similar signatures.
The example only makes an estimate of the first lower order moments of the component matrix ${\bf D}(N)$. 
These moments can give valuable information:
in cases where it is known that there are few distinct eigenvalues, and the multiplicities are known, 
only some lower order moments are needed in order to get an estimate of these eigenvalues. 
We remark that this kind of deconvolution can be improved by further development of a second order theory for Vandermonde matrices.

In Figure~\ref{fig:vandaddsim}, we have, for Vandermonde matrices of size $N\times L$ with $L=N$, and for increasing $N$,
formed $10$ observations of the form ${\bf D}(N){\bf V}^H{\bf V}$. The average of the moments of these observations are then taken,
and a method in the framework~\cite{eurecom:vandermondeimpl2} is applied to get an estimate of the moments of ${\bf D}(N)$.
In the simulation, we have compared the estimate for the second and third moment of ${\bf D}(N)$ obtained by the implementation, 
with the actual second and third moments.
The diagonal matrix ${\bf D}(N)$ is chosen so that the distribution of its eigenvalues is $\frac{1}{3}\delta_{0.5}+\frac{1}{3}\delta_{1}+\frac{1}{3}\delta_{1.5}$,
i.e. $0.5,1,1.5$ are the only eigenvalues, and they have equal probability.
The simulation seems to indicate that the implementation performs better estimation when the matrices grow large,
in accordance with the fact that only an asymptotic result is applied.
\begin{figure}
  \begin{center}
    \epsfig{figure=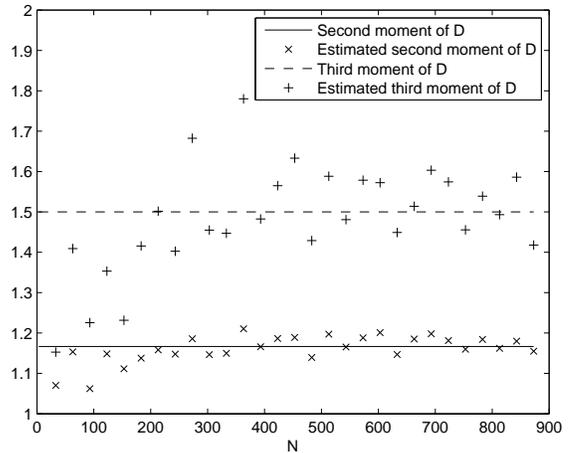,width=0.99\columnwidth}
  \end{center}
  \caption{Estimation of the second and third moment of ${\bf D}(N)$ from the average of $10$ observations of the form ${\bf D}(N){\bf V}^H{\bf V}$,
    for increasing values of $N$. ${\bf V}$ has dimensions $N\times N$.}\label{fig:vandaddsim}
\end{figure}

Although it is difficult to make a full picture of the spectral distribution of ${\bf V}_1^H{\bf V}_1$ (or the phase distribution of ${\bf V}_1$)
from deconvolution on models such as ${\bf V}_1^H{\bf V}_1{\bf V}_2^H{\bf V}_2$
(although the moments in many cases determine the distribution of the eigenvalues~\cite{book:baisilverstein}), such deconvolution can still be useful. 
For instance, from the lower order moments one can to a certain amount say "how far away ${\bf V}_1$ is from having uniform phase distribution", 
since the uniform phase distribution achieves the lowest moments of all Vandermonde matrices~\cite{ryandebbah:vandermonde1}.

\section{Conclusion and further directions}
This contribution has explained how all types of moments in Vandermonde-type expressions can be obtained,
and when one can expect that the moments/spectrum of the result only depend on the moments/spectrum of the input matrices
(which is a requirement for performing deconvolution).
The results can be used to compute the moments of any singular law involving a combination of many independent matrices.
An implementation which is capable of performing these moment computations is also presented,
and moment formulas generated by the implementation were presented. The applications to wireless communications are still under study \cite{ryandebbah:applications}.
We have also described convolution operations on Vandermonde matrices which can not be performed in terms of the spectrum,
but rather in terms of the phase distributions.
We have also expanded known results on convergence of Vandermonde matrices to almost sure convergence.

Interestingly, Vandermonde matrices fit into a framework similar to that of freeness.
Future papers will address a unified framework, where a more general theory which addresses when deconvolution is possible is presented.

It is still an open problem to find exact formulas for any moment of a Vandermonde matrix.
The same applies to identifying these moments as the moments of a certain density.
Future papers may also address how the implementation presented here can be made more efficient.

\appendices

\section{The proof of Theorem~\ref{deconv}} \label{deconvproof}
Let us first assume that all phase distributions are uniform.  
Writing out the matrix product in (\ref{moregeneral2}) we get
\begin{eqnarray}
  & & \sum_{(i_1,...,i_n)} \sum_{(j_1,...,j_n)} \nonumber\\
  & & \hspace{1cm} \prod_{i=1}^{|\sigma_1|} N_i^{-|\sigma_{1i}|/2} L^{-1} \nonumber\\
  & & \hspace{1cm} \times \E ( e^{i_2(\omega_{\sigma_1(1),j_1}-\omega_{\sigma_1(2),j_2})} \times\cdots \nonumber\\
  & & \hspace{2cm}             \times e^{i_1(\omega_{\sigma_1(2n-1),j_n}-\omega_{\sigma_1(2n),j_1})} ) \nonumber\\
  & & \hspace{1cm} \times {\bf D}_1(N)(j_1,j_1) \times\cdots\times {\bf D}_n(N)(j_n,j_n),
\end{eqnarray}
where 
\begin{enumerate}
  \item $1\leq j_1,...,j_n\leq L$ (as in~\cite{ryandebbah:vandermonde1})
  \item $0\leq i_1,...,i_n\leq N_l-1$ for appropriate $l$ (as in~\cite{ryandebbah:vandermonde1}), 
  \item $\sigma_1=\{\sigma_{11},...,\sigma_{1|\sigma_1|}\}$ with $\sigma_{1k}= \{ j | i_j=k \}$,
  \item $\omega_{\sigma_1(i),j_i}$ is the phase for column $j_i$ in the $i$'th matrix entry. 
\end{enumerate}
Define the partition $\pi=\pi(j_1,...,j_n)\in{\cal P}(n)$ by equality of the $j_i$, i.e. $k\sim_{\pi}l$ if and only if $j_k=j_l$. 
Noting that $\omega_{\sigma_1(k),j_k}$, $\omega_{\sigma_1(l),j_l}$ are equal if and only if $\sigma_1(k)=\sigma_1(l)$ and $j_k=j_l$
(if not they are independent), we define
$\rho(\pi)\leq\sigma_1\in{\cal P}(2n)$ as the partition in ${\cal P}(n)$ generated by the relations:
\[
   k \sim_{\rho(\pi)} l \mbox{ if } \left\{ \begin{array}{l} \lfloor k/2\rfloor+1\sim_{\pi}\lfloor l/2\rfloor + 1 \mbox{ and } \\ k\sim_{\sigma_1} l \end{array} \right.
\]
Here $\lfloor x\rfloor$ means the largest whole number less than $x$.  
In other words, $k$ and $l$ are in the same block of $\rho(\pi)$ 
if and only if the corresponding phases $\omega_{\sigma_1(k),j_k}$ and $\omega_{\sigma_1(l),j_l}$ from the $k$'th and $l$'th matrix entries are dependent. 
We will have use for the following relation between $\rho(\pi)$ and $\pi$, 
which will help us to limit our calculations to a certain class of partitions. 
\begin{lemma} \label{pilemma}
The following holds:
\begin{equation} \label{equalityoccurs}
  |\pi| \leq |\rho(\pi)| - r(\pi) + 1, 
\end{equation}
Moreover, both equality and strict inequality can occur in (\ref{equalityoccurs}).
\end{lemma}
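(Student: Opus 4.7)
The plan is to exploit the structural identity $\rho(\pi) = f^{-1}(\pi) \wedge \sigma_1$, where $f : \{1,\ldots,2n\} \to \{1,\ldots,n\}$ is the map $f(k) = \lfloor k/2 \rfloor + 1$ (taken cyclically mod $n$), and reduce the claim to a submodularity/spanning-tree statement in the partition lattice. The quantity $r(\pi)$ should be read as the number of connected components of the bipartite overlap graph between the blocks of $f^{-1}(\pi)$ and the blocks of $\sigma_1$ (equivalently, the number of blocks of $f^{-1}(\pi) \vee \sigma_1$), so that the inequality becomes a concrete cycle-rank bound.

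First I would record the elementary fact that $f$ has fibers of uniform size $2$, and therefore $|f^{-1}(\pi)| = |\pi|$ for every $\pi \in {\cal P}(n)$. Each block $W$ of $\pi$ lifts to the block $f^{-1}(W)$ of $f^{-1}(\pi)$, which has exactly $2|W|$ elements. Then, because $\rho(\pi)$ refines $f^{-1}(\pi)$, the blocks of $\rho(\pi)$ sort themselves by which block of $f^{-1}(\pi)$ they live in, and inside each $f^{-1}(W)$ they are precisely the traces of $\sigma_1$-blocks on that set. This gives the decomposition
\[
  |\rho(\pi)| \;=\; \sum_{W \in \pi} (\text{number of }\sigma_1\text{-blocks meeting } f^{-1}(W)).
\]

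Next, I would build the bipartite graph $G$ with vertex classes $\pi$ and $\sigma_1$, placing an edge between $W \in \pi$ and $S \in \sigma_1$ whenever $f^{-1}(W) \cap S \neq \emptyset$. The sum above is then exactly the edge count of $G$, while $|\pi|+|\sigma_1|$ is its vertex count, and $r(\pi)$ is its number of connected components. Since $G$ has at least $|\pi|+|\sigma_1| - r(\pi)$ edges (each component on $v$ vertices must contain at least $v-1$ edges), we get
\[
  |\rho(\pi)| \;\geq\; |\pi| + |\sigma_1| - r(\pi),
\]
which after rearrangement is stronger than the claim. To descend to the stated bound $|\pi| \le |\rho(\pi)| - r(\pi) + 1$, I would argue that only one connected component of $G$ matters in the worst case because of the cyclic wrap-around in $f$ (the positions $1$ and $2n$ share the $j$-index $j_1$, which always forces at least one extra identification absorbing $|\sigma_1|-1$ of the slack); this is exactly the submodularity identity $|f^{-1}(\pi)|+|\sigma_1| \le |f^{-1}(\pi) \wedge \sigma_1| + |f^{-1}(\pi) \vee \sigma_1|$ applied in the partition lattice of $\{1,\ldots,2n\}$.

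For the second assertion about equality and strict inequality, I would exhibit two small explicit examples with $n=2$: take $\sigma_1 = \{\{1\},\{2,3\},\{4\}\}$ and $\pi$ the all-singleton partition, where $G$ is a forest and one checks by direct count that equality holds; and take $\sigma_1 = \{\{1\},\{2\},\{3\},\{4\}\}$ with $\pi$ the all-singleton partition, where $G$ is acyclic but carries extra components, producing strict inequality. The main obstacle I expect is pinning down the precise definition of $r(\pi)$ and confirming that the cyclic identification $j_{n+1}=j_1$ is handled correctly in the spanning-tree count; once this is done, the inequality follows from the submodularity of rank in the partition lattice with the cycle-rank interpretation, and the $+1$ term in the bound is exactly the contribution of one distinguished spanning tree in the overlap graph.
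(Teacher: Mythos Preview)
Your proposal has a genuine gap, rooted in a misidentification of $r(\pi)$. In the paper, $r(\pi)$ is defined as the number of blocks of $\rho(\pi)\vee[0,1]_n$, where $[0,1]_n=\{\{1,2\},\{3,4\},\ldots,\{2n-1,2n\}\}$; it is \emph{not} $|f^{-1}(\pi)\vee\sigma_1|$. These two quantities do not coincide in general (the fibers of your $f$ form the shifted pair partition $\{\{2n,1\},\{2,3\},\ldots\}$, not $[0,1]_n$). Because of this, the bipartite graph you build --- between $\pi$-blocks and $\sigma_1$-blocks --- has the wrong component count, and the inequality you derive, $|\rho(\pi)|\ge |\pi|+|\sigma_1|-r'$ with your $r'$, rearranges to $|\pi|\le|\rho(\pi)|-|\sigma_1|+r'$, which is neither the target nor stronger than it. Your closing examples confirm the mismatch: with the correct $r(\pi)$, both of your $n=2$ examples give strict inequality ($2<3$ in each case), so neither exhibits equality.

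That said, your lattice-theoretic instinct is sound and can be repaired. Writing $\alpha=f^{-1}(\pi)$ and $\beta=[0,1]_n$, the key facts are: (i) $\alpha\vee\beta=1_{2n}$ for every $\pi$ (the two shifted pair-partitions interlock cyclically), so $|\alpha|-|\alpha\vee\beta|=|\pi|-1$; and (ii) for any $\delta\le\alpha$ one has $|\delta|-|\delta\vee\beta|\ge|\alpha|-|\alpha\vee\beta|$, since each single merge taking $\delta$ toward $\alpha$ reduces $|\cdot\vee\beta|$ by at most one. Applying (ii) with $\delta=\rho(\pi)=\alpha\wedge\sigma_1$ gives $|\rho(\pi)|-r(\pi)\ge|\pi|-1$, which is the lemma. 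This is essentially the paper's adjacency argument phrased in lattice language. For equality, take $\sigma_1=1_{2n}$ (a single Vandermonde matrix): then $\rho(\pi)=f^{-1}(\pi)$, $r(\pi)=1$, and $|\rho(\pi)|-r(\pi)+1=|\pi|$ for every $\pi$; your second example already witnesses strict inequality.
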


\begin{proof}
Since each block in $\pi$ is associated with at least one block in $\rho(\pi)$ by definition, we have that $|\pi|\leq|\rho(\pi)|$.
Moreover, if $\rho_1$ is adjacent to $\rho_2$, they have a $j$-value common at their border, so that $|\pi|\leq|\rho(\pi)|-1$. 
If $\rho_3$ is adjacent to $\{\rho_1,\rho_2\}$, they also have a $j$-value common at their border, so that also $|\pi|\leq|\rho(\pi)|-2$. 
We can continue in this way for $\rho_4,\rho_5,...,\rho_r$, and we obtain in the end that $|\pi|\leq|\rho(\pi)|-r(\pi)+1$. 
It is also clear from this construction, by considering different border possibilities for the $\rho_1,\rho_2,...$, that 
both equality and strict inequality can occur.
\end{proof}

In the following we will denote the set of partitions where (\ref{equalityoccurs}) holds by ${\cal B}(n)$ 
(note that ${\cal B}(n)$ will also depend on $\sigma_1$, but this dependency will be implicitly assumed, and will thus not be mentioned in the following). 
Writing $\rho(\pi)=\{ W_1,...,W_{|\rho(\pi)|}\}$, there are $|\rho(\pi)|$ independent phases in the corresponding term, 
which we denote $\omega_{W_1},...,\omega_{W_{|\rho(\pi)|}}$.
Write $W_j=W_j^{\cdot}\cup W_j^H$, 
where $W_j^{\cdot}$ consists of the even elements of $W_j$ (corresponding to the ${\bf V}$-terms), 
$W_j^H$ consists of the odd elements of $W_j$ (corresponding to the ${\bf V}^H$-terms). 
(\ref{moregeneral2}) can now be written (computations are similar to Appendix 1 in~\cite{ryandebbah:vandermonde1})
\begin{eqnarray}
  & & \sum_{\pi\in{\cal P}(n)} \sum_{(i_1,...,i_n)} \sum_{ \stackrel{ (j_1,...,j_n) }{ \pi(j_1,...,j_n) = \pi} } \nonumber\\
  & & \hspace{1cm} \prod_{i=1}^{|\sigma_1|} N_i^{-|\sigma_{1i}|/2} L^{-1} \nonumber\\
  & & \hspace{1cm} \times \prod_{r=1}^{|\rho(\pi)|} \E \left( e^{j \left( \sum_{k\in W_r^H} i_{(k+1)/2+1} - \sum_{k\in W_r^{\cdot}} i_{k/2+1} \right)  \omega_{W_r}} \right) \nonumber\\
  & & \hspace{1cm} \times {\bf D}_1(N)(j_1,j_1) \times\cdots\times {\bf D}_n(N)(j_n,j_n) \label{limitmoment3}.
\end{eqnarray}
Since $\E\left(e^{jn\omega}\right)=0$ when $\omega$ is uniform and $n\neq 0$, we get that the $i_1,...,i_n$ contribute in (\ref{limitmoment3}) only if 
\begin{equation} \label{solvethese}
  \sum_{k\in W_r^H} i_{(k+1)/2+1} = \sum_{k\in W_r^{\cdot}} i_{k/2+1}
\end{equation}
for $1\leq r\leq |\rho(\pi)|$. 
The coefficient matrix of this system, denoted $A$, is a $|\rho(\pi)|\times n$ with entries from $\{-1,0,1\}$. 
The rank of $A$ is at most $k-1$, since the sum of all rows is $0$. 
Note that the number of solutions to (\ref{solvethese}) can also be written
\begin{equation} \label{kpundefmot}
  \int_{[0,2\pi)^{|\rho(\pi)|}} F(x) dx_1\cdots dx_{|\rho(\pi)|},
\end{equation}
where
\begin{equation} \label{fudef}
  F(x) = \prod_{k=1}^n \frac{1-e^{j N_{i_{2k}} (x_{b(2k-1)} - x_{b(2k)})}}{1-e^{j (x_{b(2k-1)} - x_{b(2k)})}},
\end{equation}
where $b(k)$ means the block in $\rho(\pi)$ which $k$ belongs to. 
This follows (as in~\cite{ryandebbah:vandermonde1}) from summing over all possible choices of $i_1,...,i_n$ in (\ref{limitmoment3}), 
and using the formula for the sum of a geometric series. 

It is easily seen that the rank of $A$ is exactly $k-1$ when $\rho(\pi)\vee [0,1]_n=1_{2n}$. 
More generally, if $\rho(\pi)\vee [0,1]_n=\{ \rho_1,...,\rho_r\}$ with each $\rho_i \geq [0,1]_{\|\rho_i\|/2}$, 
the rank of the system is $|\rho(\pi)|-r$. 
This follows since the equations corresponding to each $\rho_i$ have no variables in common with those from other $\rho_j$, 
and since the sum of the equations corresponding to $\rho_i$ is $0$, so that the coefficient matrix corresponding to  $\rho_i$ 
has one less than full rank. 
Also, note that $N_{i_{s+2}}=N_{i_{t+2}}$ whenever $2s+1,2s+2,2t+1,2t+2$ all belong to the same such $\rho_i$, 
and denote this common value by $N_{\rho_i}$
(meaning that there is a common upper limit $N_{\rho_i}$ to all variables $i_l$ occurring in connection with the same block $\rho_i$). 
This means that the number of solutions to (\ref{solvethese}) is of order 
\begin{eqnarray*}
  \lefteqn{ O\left(\prod_{i=1}^r N_{\rho_i}^{\|\rho_i\|/2-|\rho_i|+1}\right) } \\
  &=& O\left(\prod_{i=1}^r L^{\|\rho_i\|/2-|\rho_i|+1}\right)=O\left(L^{n-|\rho|+r}\right).
\end{eqnarray*}
Since $r$ depends only on $\pi$, we will also write $r=r(\pi)$.
Since the number of solutions to (\ref{solvethese}) is given by (\ref{kpundefmot}), the limit 
\begin{equation} \label{kpundef}
  \lim_{L\rightarrow\infty} \frac{1}{\prod_{i=1}^r N_{\rho_i}^{\|\rho_i\|/2-|\rho_i|+1}} \int_{[0,2\pi)^{|\rho(\pi)|}} F(x) dx_1\cdots dx_{|\rho(\pi)|}
\end{equation}
exists (here $u$ denotes the uniform distribution), and we will denote this limit by $K_{\rho(\pi),u}$.
Moreover $K_{\rho(\pi),u}=\prod_{i=1}^r K_{\rho_i,u}$, 
since the splitting $\rho(\pi)\vee [0,1]_n=\{ \rho_1,...,\rho_r\}$ actually splits the equations into $r$ sets where each set has no variables in common 
with other sets. 
This definition extends that of Vandermonde mixed moment expansion coefficients from~\cite{ryandebbah:vandermonde1}
to the case where the equations (\ref{solvethese}) may have an unequal number of variables on each side.
Note that in~\cite{ryandebbah:vandermonde1}, those coefficients were defined in terms of $\pi\in{\cal P}(n)$, 
while here they are defined in terms of $\rho(\pi)$, which captures any $\sigma_1$, which is new in the analysis given here.
Also in accordance with~\cite{ryandebbah:vandermonde1},
we will denote by $K_{\rho(\pi),u,L}$ the quantity inside the limit of (\ref{kpundef}), so that the 
number of solutions to (\ref{solvethese}) is 
\begin{equation} \label{numsols}
\prod_{i=1}^r N_{\rho_i}^{\|\rho_i\|/2-|\rho_i|+1}K_{\rho(\pi),u,L}.
\end{equation}

The number of blocks in the partitions $\rho(\pi),\pi$ say how many distinct choices from $(i_1,...,i_n)$ and $(j_1,...,j_n)$, respectively, 
contribute in (\ref{limitmoment3}). 
By substituting
\[ \sum_{j_k}{\bf D}_i(N)(j_k,j_k) = L\mathrm{tr}({\bf D}(N)), \]
and using (\ref{equalityoccurs}), we see that (\ref{limitmoment3}) is 
\begin{eqnarray*}
  \lefteqn{O(L^{-n-1+n-|\rho(\pi)|+r+|\pi|})} \\
  &\leq& O(L^{-n-1+n-|\rho(\pi)|+r+|\rho(\pi)| - r + 1}) = O(1),
\end{eqnarray*}
with equality if and only if $\pi\in{\cal B}(n)$ by Lemma~\ref{pilemma}. 
To check if $\pi$ belongs to ${\cal B}(n)$, $\rho(\pi)$ needs to be computed, and it is checked if equality in (\ref{equalityoccurs}) holds.
If so, the corresponding equation system (\ref{solvethese}) is constructed, 
and solved using Fourier-Motzkin elimination. Adding contributions for all partitions, we obtain (\ref{moregeneral2}).

For $\pi\in{\cal B}(n)$, 
noting that we can write $\prod_{i=1}^{|\sigma_1|} N_i^{-|\sigma_{1i}|/2}= \prod_{i=1}^{r} N_{\rho_i}^{-\|\rho_i\|/2}$, 
and using (\ref{numsols}),
we can write the contribution from $\pi$ in (\ref{limitmoment3}) as 
\begin{eqnarray*}
  \lefteqn{ \prod_{i=1}^{r} N_{\rho_i}^{-\|\rho_i\|/2} L^{-1} L^{|\pi|} \prod_{i=1}^r N_{\rho_i}^{\|\rho_i\|/2-(|\rho_i|-1)} K_{\rho(\pi),u,L} D_{\pi} } \\
  &=& L^{|\rho(\pi)| - r(\pi)} \prod_{i=1}^r N_{\rho_i}^{-(|\rho_i|-1)} K_{\rho(\pi),u,L} D_{\pi} \\
  &=& \prod_{i=1}^r L^{|\rho_i| - 1} \prod_{i=1}^r N_{\rho_i}^{-(|\rho_i|-1)} K_{\rho(\pi),u,L} D_{\pi} \\
  &=& \prod_{i=1}^r \left(\frac{L}{N_{\rho_i}}\right)^{|\rho_i|-1} K_{\rho(\pi),u,L} D_{\pi}.
\end{eqnarray*} 
Thus, if $\omega$ is uniform, taking limits in (\ref{limitmoment3}) gives
\begin{eqnarray*}
  \lefteqn{ \sum_{\pi\in{\cal B}(n)} \prod_{i=1}^r \left(\frac{L}{N_{\rho_i}}\right)^{|\rho_i|-1} K_{\rho(\pi),u,L} D_{\pi} } \\
  &\rightarrow& \sum_{\pi\in{\cal B}(n)} \prod_{i=1}^r c_{\rho_i}^{|\rho_i|-1} K_{\rho(\pi),u} D_{\pi} \\
  &=& \sum_{\pi\in{\cal B}(n)} \prod_{i=1}^r \left( c_{\rho_i}^{|\rho_i|-1} K_{\rho_i,u}\right) D_{\pi},
\end{eqnarray*}
where we have substituted $c_{\rho_i}=\lim_{L\rightarrow\infty} \frac{L}{N_{\rho_i}}$. 
When the $\omega_i$ are not uniform, we can still in (\ref{limitmoment3}) sum over the different $i_1,...,i_n$ to factor out the term
\begin{equation} \label{kpundefmot2}
  \int_{[0,2\pi)^{|\rho(\pi)|}} F(\omega) d\omega_1\cdots d\omega_{|\rho(\pi)|},
\end{equation}
where $F$ is defined by (\ref{fudef}), 
and where the only difference from (\ref{kpundefmot}) is that the uniform distribution $u$ has been replaced with $\omega$. 
The analysis is otherwise the same as in the uniform case, the major issue being the existence of the limits 
\begin{equation} \label{kpundef2}
  \lim_{L\rightarrow\infty} \frac{1}{\prod_{i=1}^r N_{\rho_i}^{\|\rho_i\|/2-|\rho_i|+1}} \int_{[0,2\pi)^{|\rho(\pi)|}} F(\omega) d\omega_1\cdots d\omega_{|\rho(\pi)|},
\end{equation}
which thus also will be called Vandermonde mixed moment expansion coefficients, and denoted $K_{\rho(\pi),\omega}$.
As in the uniform case, note that $K_{\rho(\pi),\omega}=\prod_{i=1}^r K_{\rho_i,\omega}$, and if these limits exist, we get as in the uniform case a limit on the form
\[
  \sum_{\pi\in{\cal B}(n)} \prod_{i=1}^r \left(c_{\rho_i}^{|\rho_i|-1} K_{\rho_i,\omega}\right) D_{\pi}.
\]
In~\cite{ryandebbah:vandermonde1}, it was shown that the limits $K_{\pi,\omega}$ exist when $\omega$ has a continuous density. 
We will show that the same holds for $K_{\rho(\pi),\omega}$, and the proof of this will follow from the following lemma, 
which is a generalization of Lemma 2 in~\cite{ryandebbah:vandermonde1}:

\begin{lemma} \label{appendixlemma}
Let $\rho(\pi)\leq\sigma_1\in{\cal P}(2n)$ be any partition such that $\rho(\pi)\vee [0,1]_n=1_{2n}$.
For any $\epsilon > 0$,
\begin{equation}
  \lim_{N\rightarrow\infty} \frac{1}{N^{n+1-|\rho(\pi)|}} \int_{B_{\epsilon , k}} F(\omega) d\omega = 0,
\end{equation}
where
\begin{equation}
  B_{\epsilon , k} = \{ (\omega_1,...,\omega_{|\rho(\pi)|}) | | \omega_{b(2k-1)} -\omega_{b(2k)} | > \epsilon \},
\end{equation}
and where $b(k)$ denotes the block in $\rho(\pi)$ which $k$ belongs to.
\end{lemma}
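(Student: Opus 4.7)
The plan is to follow the strategy of Lemma 2 of \cite{ryandebbah:vandermonde1}, which this result generalizes. First, I would factor the integrand as $F(\omega) = \prod_{k=1}^n f_k(\Delta_k)$ with $\Delta_k = \omega_{b(2k-1)}-\omega_{b(2k)}$, noting that each $f_k(u) = (1-e^{jN_{i_{2k}}u})/(1-e^{ju})$ is a partial geometric sum obeying the classical Dirichlet-type bound $|f_k(u)| \leq \min(N_{i_{2k}}, \pi/|u|)$ on $(-\pi,\pi]$. The key observation is that on $B_{\epsilon,k}$ we have $|f_k(\Delta_k)| \leq \pi/\epsilon$, a bound \emph{independent of $N$}, whereas generically this factor can reach $N_{i_{2k}}$. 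Restricting to $B_{\epsilon,k}$ therefore effectively costs a factor of $N$ relative to the unrestricted regime, which is precisely the gap we need to save.

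Next, I would exploit the combinatorial assumption $\rho(\pi) \vee [0,1]_n = 1_{2n}$, which says that the multigraph $G$ on the $|\rho(\pi)|$ blocks of $\rho(\pi)$ with edges $\{(b(2k'-1),b(2k')): 1\leq k'\leq n\}$ is connected. I would choose a spanning tree $T$ of $G$ containing the edge associated with $k$, and perform a linear change of variables (with Jacobian $\pm 1$) sending the $|\rho(\pi)|$ variables $\omega_r$ to one ``root'' variable (which contributes just $2\pi$ after integration) together with the $|\rho(\pi)|-1$ tree-edge differences. The $n-|\rho(\pi)|+1$ chord differences then appear as fixed integer linear combinations of the tree differences.

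In these coordinates the integral over $B_{\epsilon,k}$ is bounded by choosing, for each factor, an appropriate Dirichlet-kernel estimate: $|f_k(\Delta_k)|\leq \pi/\epsilon$ on the restricted distinguished variable; the $L^1$-estimate $\int|f_{k_j}(u)|\,du = O(\log N)$ on each of the remaining tree-edge variables; and the pointwise bound $|f_{k'}|\leq N_{i_{2k'}}$ on each chord factor. Combining these carefully yields an upper bound of order $o(N^{n+1-|\rho(\pi)|})$ on $\int_{B_{\epsilon,k}}|F|\,d\omega$, from which the lemma follows upon division.

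The main obstacle is the precise bookkeeping needed to recover the full factor-of-$N$ saving: a naive combination of $L^1$ and $L^\infty$ bounds leaves stray logarithmic factors that must be absorbed via a more careful choice of $L^p$ exponents, or equivalently by exploiting cancellation in the Fourier expansion of $F$ against the indicator of $B_{\epsilon,k}$. This analytic core is essentially identical to that of Lemma 2 in \cite{ryandebbah:vandermonde1}; the new content here is purely combinatorial, since allowing $\rho(\pi)\leq \sigma_1$ only modifies the coincidence pattern in the equation system (\ref{solvethese}) (the $W_r^H$ and $W_r^{\cdot}$ may have different sizes), not the underlying Dirichlet-kernel analysis, so the previous argument transfers almost verbatim.
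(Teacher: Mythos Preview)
Your proposal is correct and takes essentially the same approach as the paper: both identify that the only new ingredient beyond Lemma~2 of \cite{ryandebbah:vandermonde1} is the combinatorial observation that the connectivity hypothesis $\rho(\pi)\vee[0,1]_n=1_{2n}$ makes the sets $B_{\epsilon,k}$ here play the same role as the $B_{\epsilon,r}$ there, after which the Dirichlet-kernel analysis carries over verbatim. The paper's proof is in fact just two sentences deferring to \cite{ryandebbah:vandermonde1}; your spanning-tree change of variables and explicit $L^1$/$L^\infty$ bookkeeping spell out more detail than the paper gives, but this is precisely the mechanism underlying the cited argument, and your honest flagging of the logarithmic-factor issue (resolved identically to the earlier lemma) is appropriate.
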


\begin{proof}
  The condition $\rho(\pi)\vee [0,1]_n=1_{2n}$ implies that when $\omega\in B_{\epsilon,k}$, $\omega_i-\omega_j<2n\epsilon$ for all $i,j$, 
  which means that the definition of $B_{\epsilon , k}$ is similar to the definition of $B_{\epsilon , r}$ in Lemma 2 in Appendix H in~\cite{ryandebbah:vandermonde1}. 
  The proof otherwise follows the same lines as~\cite{ryandebbah:vandermonde1}. 
\end{proof}

Using Lemma~\ref{appendixlemma} repeatedly, we see also that 
\[
  \lim_{N\rightarrow\infty} \frac{1}{N^{n+r-|\rho(\pi)|}} \int_{\cup_i B_{\epsilon , k_i}} F(\omega) d\omega = 0
\]
Letting $N\rightarrow\infty$, we obtain as in Appendix H of~\cite{ryandebbah:vandermonde1} in the limit
\begin{eqnarray*}
  \lefteqn{ \lim_{N\rightarrow\infty} \frac{1}{N^{n+r-|\rho(\pi)|}} \int F(\omega) d\omega } \\
  &=& K_{\rho(\pi),u} \prod_{i=1}^r (2\pi)^{|\rho_i\cap\sigma_{1j}|-1} \int \prod_j p_{\omega_j}(x)^{|\rho_i\cap\sigma_{1j}|}dx \\
  &=& K_{\rho(\pi),u} \prod_{i=1}^r (2\pi)^{|\rho_i\cap\sigma_j|-1} \int \prod_j p_{\omega_j}(x)^{|\rho_i\cap\sigma_j|}dx,
\end{eqnarray*}
where $\sigma_j$ are the blocks of $\sigma$.
Things have now been reduced to the case of uniform phase distribution.
In summary, (\ref{limitmoment3}) can be written
\begin{equation} \label{expressthis}
  \sum_{\pi\in{\cal B}(n)} D_{\pi} \prod_{i=1}^r \left((2\pi c_{\rho_i})^{|\rho_i|-1} K_{\rho_i,u}\right) 
  \prod_{i=1}^r \int \prod_j p_{\omega_j}(x)^{|\rho_i\cap\sigma_j|} dx,
\end{equation}
This is the standard form which the implementation uses, where the output of Fourier-Motzkin elimination is substituted into $K_{\rho(\pi),u}$. 
In (\ref{expressthis}), we recognize the integrals $I_{k,\omega}$ in (\ref{ikdef}). 
We will therefore substitute $I_{k,\omega}$ in the following. 

The requirement from Theorem~\ref{deconv} that $\sigma\geq [0,1]_n$ 
(which happens whenever terms of the form ${\bf V}_{\omega_1}^H{\bf V}_{\omega_2}$ 
(with $\omega_1,\omega_2$ different and ${\bf V}_{\omega_1},{\bf V}_{\omega_2}$ independent) 
do not occur) translates to the fact that, for any $\pi\in{\cal P}(n)$,  
in all $\rho_i$ the corresponding random matrices have equal phase distributions 
(with no assumptions on whether the random matrices are independent or not). 
From this it follows from (\ref{expressthis}) that
no integrand in (\ref{expressthis}) will contain two different densities. 
Therefore, the mixed moment is completely determined from the integrals $I_{k,\omega}$. 
To make the connection between these quantities and the moments, we need the following lemma, compiled from~\cite{ryandebbah:vandermonde1}:
\begin{lemma} \label{momentstoi}
Let ${\bf V}$ be a Vandermonde matrix with phase distribution $\omega$ and aspect ratio $c$.
For each $n$ there exists an invertible $n\times n$ matrix $A_n$ so that
\begin{equation} \label{insertthis}
  \left( \begin{array}{c} 1 \\ cI_{2,\omega} \\ c^2I_{3,\omega} \\ \vdots \\ c^{n-1}I_{n,\omega} \end{array} \right) 
  =
  A_n
  \left( \begin{array}{c} 1 \\ V_2 \\ V_3 \\ \vdots \\ V_n \end{array} \right)
\end{equation}
\end{lemma}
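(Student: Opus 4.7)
The plan is induction on $n$, using the explicit expression developed in Appendix~\ref{deconvproof} specialised to a single Vandermonde matrix. Setting all $\mathbf{D}_i=\mathbf{I}$ in (\ref{expressthis}), and using that for a single $\mathbf{V}$ we have $\sigma=\sigma_1=1_{2n}$, I obtain a formula for $V_n=\lim\mathrm{tr}((\mathbf{V}^H\mathbf{V})^n)$ as a sum over partitions $\pi\in\mathcal{B}(n)$, each summand being a product of a Vandermonde mixed moment expansion coefficient $K_{\rho_i,u}$, a power of $c$, and integrals of $p_\omega$ raised to integer powers. Upon identifying the integrals with $I_{k,\omega}$ via (\ref{ikdef}), each summand becomes a rational expression (in $c$ alone) in the variables $I_{2,\omega},\ldots,I_{n,\omega}$. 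The crucial structural property is that the variable of highest index, $I_{n,\omega}$, is contributed by a single distinguished partition $\pi_\star\in\mathcal{B}(n)$ --- the one whose super-block structure collapses to a single component of maximal size --- with coefficient of the form $\alpha_n c^{n-1}$ for a nonzero rational constant $\alpha_n$.

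For the induction step, I would assume the existence of an invertible $A_{n-1}$ expressing $c^{k-1}I_{k,\omega}$ as a combination of $1,V_2,\ldots,V_{k}$ for $k\leq n-1$. Writing
\[
V_n \;=\; \alpha_n\, c^{n-1} I_{n,\omega} \;+\; R_n,
\]
where $R_n$ involves only $I_{k,\omega}$ with $k<n$, I substitute the induction hypothesis to rewrite $R_n$ in terms of $1,V_2,\ldots,V_{n-1}$, then solve algebraically for $c^{n-1}I_{n,\omega}$. The resulting matrix $A_n$ is block upper triangular, with $A_{n-1}$ in the upper-left $(n-1)\times(n-1)$ block and the scalar $1/\alpha_n\neq 0$ in the bottom-right corner, hence invertible. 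The base case $n=1$ is trivial, since $V_1 = 1$ and $I_{1,\omega}=1$ by $\int p_\omega = 1$, so $A_1=(1)$ works.

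The main obstacle splits into two technical points. First, I must confirm that the highest variable $I_{n,\omega}$ appears \emph{linearly} (as a single monomial, not as a product with other $I_k$'s) in the expression for $V_n$, so that the passage from $V_n$ to $c^{n-1}I_{n,\omega}$ is genuinely a linear algebraic inversion rather than a nonlinear one; this requires a careful combinatorial analysis of which $\pi\in\mathcal{B}(n)$ can contribute an $I_{n,\omega}$-factor, using the equality case of (\ref{equalityoccurs}) in Lemma~\ref{pilemma} together with the fact that super-block sizes are constrained by the structure of $\rho(\pi)\vee[0,1]_n$. Second, I must verify $\alpha_n\neq 0$: this reduces to showing that the Vandermonde mixed moment expansion coefficient $K_{\rho(\pi_\star),u}$ is positive, which follows since this quantity is the limit of the Lebesgue volume (via (\ref{kpundef})) of a nondegenerate polytope --- positivity is manifest from the Fourier-Motzkin description in Section~\ref{theoremimpl}, where the solution set of (\ref{likninger}) corresponding to $\pi_\star$ has full dimension.
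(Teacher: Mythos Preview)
The paper does not actually prove this lemma; it is stated as ``compiled from~\cite{ryandebbah:vandermonde1}'' (see the sentence introducing Lemma~\ref{momentstoi} in Appendix~\ref{deconvproof}, and the remark after (\ref{ikdef}) citing formulas (13) and (20) of that reference). Your proposal therefore supplies an argument the paper omits, and the inductive/triangular strategy you outline is correct.

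That said, both of your anticipated obstacles dissolve once (\ref{expressthis}) is specialised correctly to the single-matrix case $\sigma=\sigma_1=1_{2n}$. As noted at the start of Appendix~\ref{appendixteo012}, in this situation $r(\pi)=1$ and $|\rho(\pi)|=|\pi|$ for \emph{every} $\pi\in\mathcal{P}(n)$, so $\mathcal{B}(n)=\mathcal{P}(n)$ and the product $\prod_{i=1}^{r}$ in (\ref{expressthis}) collapses to a single factor. With $D_\pi=1$ this gives
\[
V_n \;=\; \sum_{\pi\in\mathcal{P}(n)} K_{\rho(\pi),u}\, c^{|\pi|-1} I_{|\pi|,\omega}
\;=\; \sum_{k=1}^n \Bigl(\textstyle\sum_{|\pi|=k} K_{\rho(\pi),u}\Bigr)\, c^{k-1} I_{k,\omega},
\]
which is already \emph{linear} in the quantities $c^{k-1}I_{k,\omega}$; no products of different $I_k$'s occur, and your first obstacle is vacuous. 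For the second, the only partition with $|\pi|=n$ is the discrete partition $0_n$, and $K_{0_n,u}=1$ since every noncrossing partition contributes volume~$1$ (first bullet in Section~\ref{theoremimpl}). Hence $\alpha_n=1$ for all $n$, the transition matrix $B_n=A_n^{-1}$ is lower triangular with $1$'s on the diagonal (your ``block upper triangular'' should read lower triangular), and invertibility is immediate without any appeal to polytope nondegeneracy.
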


Inserting (\ref{insertthis}) in (\ref{expressthis}) when each $\rho_i$ consists of equal phase distributions, 
we obtain that (\ref{moregeneral2}) is completely determined from the moments $V^{(i)}_n$. 
Since there are $r$ integrals multiplied together in (\ref{expressthis}), its general form is seen to coincide with that of (\ref{mostgeneral}). 
We have thus proved Theorem~\ref{deconv}.
When terms of the form ${\bf V}_{\omega_1}^H{\bf V}_{\omega_2}$ occur, Section~\ref{counterex} shows that we can't expect dependence on only the moments. 
Instead the mixed moment depends on the entire phase distribution. 

\subsection{Handling the aspect ratio} \label{deconvproofc}
We will finally comment on appropriate forms of (\ref{expressthis}) which are useful in implementations of convolution.
We first turn to the case when there are deterministic matrices present. 
Assume that all matrix aspect ratios are equal to $c$, so that 
$\prod_{i=1}^r \left(c_{\rho_i}^{|\rho_i|-1}\right)=c^{|\rho|-r}=c^{|\pi|-1}$ when $\pi\in{\cal B}(n)$. 
Defining $m_n=cM_n$ and $d_n=cD_n$ as in~\cite{ryandebbah:vandermonde1}) in this case, (\ref{expressthis}) can also be written
\begin{equation}
  m_n = \sum_{\stackrel{\rho(\pi)\leq\sigma_1}{\pi\in{\cal B}(n)}} d_{\pi} K_{\rho(\pi),u} \prod_{i=1}^r \int \prod_j p_{\omega_j}(x)^{|\rho_i\cap\sigma_j|} dx,
\end{equation}
i.e. the aspect ratio $c$ can be handled as in~\cite{ryandebbah:vandermonde1}, providing a clear parallel with Proposition 3 in that paper. 

When there are no deterministic matrices present, and $\sigma\geq [0,1]_n$, the right hand side in (\ref{expressthis}) is
\begin{equation} \label{computethesefirst}
  \prod_{i=1}^r c_{\rho_i}^{|\rho_i|-1} I_{|\rho_i|,\omega_i} K_{\rho_i,u},
\end{equation} 
where we recognize the elements in the vector on the left hand side in (\ref{insertthis}). 
Therefore, an implementation of convolution would first compute the $c^{k-1}I_{k,\omega_i}$ using Lemma~\ref{momentstoi}, 
and substitute these directly into (\ref{computethesefirst}). 
For deconvolution, (\ref{computethesefirst}) would be computed first for one of the unknown component matrices, 
and then the moments would be recovered in a second step using Lemma~\ref{momentstoi}.

In summary, in order to compute the mixed moments (\ref{moregeneral2}) of Vandermonde matrices, we need to  
\begin{enumerate}
  \item iterate through partitions $\pi\in{\cal P}(n)$, compute $\rho(\pi)$, and determine whether $\pi\in{\cal B}(n)$,
  \item perform Fourier-Motzkin elimination in order to solve the set of equations given by (\ref{solvethese}), 
  \item compute the quantities in (\ref{computethesefirst}), either from direct knowledge of the phase distribution, or by 
    computing them from the moments using (\ref{insertthis}),
  \item compute the final result by inserting the results from 1), 2), and 3) into (\ref{expressthis}).
\end{enumerate}
This explains why Section~\ref{theoremimpl} focuses on the implementation perspectives of these tasks.

\section{The proofs of the convolution formulas} \label{appendixteo012}
In this appendix, we provide additional remarks, which together with the proof in Appendix~\ref{deconvproof}
will suffice to prove the different convolution formulas. 
We first provide a short explanation why convolution 2) does not depend only on the spectra of the component matrices. 
When ${\bf D}_i(N)$ only occurs in patterns of the form ${\bf V}{\bf D}(N){\bf V}^H$, we factored out 
the moments of ${\bf D}(N)$ in (\ref{limitmoment3}) in Appendix~\ref{deconvproof}. 
For other patterns, one ends instead up with integral expressions along the diagonal of ${\bf D}(N)$ 
(the diagonal elements of ${\bf D}(N)$ are multiplied with different complex exponentials), 
which are hard to express in terms of the moments of ${\bf D}(N)$. 

\subsection{The proof of Theorem~\ref{teo0}}
We can sum over all $\pi$ in (\ref{expressthis}) for these convolutions, 
since $r(\pi)=1$ and $|\rho(\pi)|=|\pi|$ for all $\pi$ whenever $\sigma=\sigma_1=1_{2n}$.
The implementation has obtained the result by inserting (\ref{insertthis}) into (\ref{expressthis}). 
$D_{\pi}$ in (\ref{expressthis}) can be handled in the following way:
Let $R_n$ be the set of multi-indices $r=(r_1,...,r_s)$ such that
\begin{itemize}
  \item The $r_i$ are decreasing, and all are integers $>0$.
  \item $\sum r_i=n$,
\end{itemize}
and set $d_r=\prod_{i=1}^s d_{r_i}$ for $r=(r_1,...,r_s)\in R_n$. Set also
\begin{eqnarray*}
  D_1 &=& (d_1) \\
  D_2 &=& (d_2,d_1^2) \\
  D_3 &=& (d_3,d_2d_1,d_1^3) \\
  D_4 &=& (d_4,d_2^2,d_3d_1,d_2d_1^2,d_1^4) \\
  D_5 &=& (d_5,d_3d_2,d_4d_1,d_2^2d_1,d_3d_1^2,d_2d_1^3,d_1^5), 
\end{eqnarray*}
and so on. It is clear from Appendix~\ref{deconvproof} that we can find a vector $K_n$ such that
\begin{equation} \label{togen}
  \left( \begin{array}{c} M_1 \\ M_2 \\ M_3 \\ \vdots \\ M_n \end{array} \right)
  =
  D_n^T K_n A_n \left( \begin{array}{c} 1 \\ V_2 \\ V_3 \\ \vdots \\ V_n \end{array} \right).
\end{equation}
Moreover, the matrices $K_n$ and $A_n$ can be computed once and for all. 

We see that there is only one term on the right hand side in (\ref{togen}) here containing $d_n$, 
so that this term can be found once $d_1,...,d_{n-1}$ have been found. 
This enables us to perform deconvolution.

\subsection{The proof of Theorem~\ref{teo1}}
Write $\mathrm{tr}\left( ({\bf D}+{\bf V}^H{\bf V})^n \right)$ as
\begin{eqnarray*}
 &=& \sum_{k,s} 
     \sum_{\begin{array}{c} \scriptsize (r_1,...,r_k) \\ \scriptsize \sum r_i=n-k-s \end{array}} \\
 & & \mathrm{tr}\left( \underbrace{{\bf D}\cdots {\bf D}}_{r_1\mbox{ times}} {\bf V}^H{\bf V} \underbrace{{\bf D}\cdots {\bf D}}_{r_2\mbox{ times}} {\bf V}^H{\bf V} \underbrace{{\bf D}\cdots {\bf D}}_{s\mbox{ times}}\right) \\
 &=& \sum_{k,s} \sum_{\begin{array}{c} \scriptsize (r_1,...,r_k) \\ \scriptsize \sum r_i=n-k \\ \scriptsize r_1\geq s \end{array}}
     \mathrm{tr}\left( {\bf D}^{r_1} {\bf V}^H{\bf V} \cdots {\bf D}^{r_k} {\bf V}^H{\bf V} \right).   
\end{eqnarray*}
Each summand here can be computed by inserting (\ref{insertthis}) into (\ref{expressthis}) as above.
Also, the multi-indices $(r_1,...,r_k)$ are easily traversed. 
It is clear that one can generalize (\ref{togen}) to compute each summand (the vector $K_n$ is simply expanded to handle more mixed moments). 
This explains how the implementation computes the formulas for Theorem~\ref{teo1}. 

Deconvolution for Theorem~\ref{teo1} follows the exact same argument as for Theorem~\ref{teo0}.

\subsection{The proof of Theorem~\ref{teo4}}
(\ref{multeq1}) corresponds to the case where 
\[
  \sigma=\sigma_1=\{\{ 1,2,5,6,9,10,...\},\{ 3,4,7,8,11,12,...\}\}.
\]
Following the notation in (\ref{limitmoment3}) in Appendix~\ref{deconvproof}, 
when $\pi=1_{2n}$ in (\ref{multeq1}), $\rho(\pi)=[0,1]_{2n}$, so that the contribution is
\[
  k \int p_{\omega_1}^n(x)dx  \int p_{\omega_2}^n(x)dx
\]
contributes, where $k$ is a scalar. Also, for all other choices of $\pi$, the integral $I_{n,\omega_2}$ does not contribute, 
so that the equation for th $n$'th moment uniquely determines $I_{n,\omega_2}$, when the lower order integrals $\{I_{k,\omega_2}\}_{k<n}$ are known. 
Due to Lemma~\ref{momentstoi}, the same can be said for the moments, so that it is possible to perform deconvolution. 

Similarly, the contribution from $\pi=1_{n}$ in (\ref{multeq2}) for the term when the second summand is always chosen is $k I_{n,\omega_2}$,
where $k$ is a scalar. Moreover, $I_{n,\omega_2}$ contributes only for this term and this $\pi$, so that 
the equation for th $n$'th moment uniquely determines $I_{n,\omega_2}$. 
It follows as above that it is possible to perform deconvolution.

\subsection{The proof of Theorem~\ref{teo2}}
This case corresponds to $\sigma=1_{2n}$, and 
\[\sigma_1=\{[2,3],[4,5],...,[2n-2,2n-1],[2n,1]\}.\]
From (\ref{expressthis}) it is clear that we can write
\begin{equation} \label{starthere2}
  \left( \begin{array}{c} M_1 \\ M_2 \\ \vdots \\ M_n \end{array} \right) = K_n \left( \begin{array}{c} V_1 \\ V_2 \\ \vdots \\ V_{2n} \end{array} \right),
\end{equation}
where $K_n$ is an $n\times 2n$ matrix, depending only on the values computed from Fourier-Motzkin elimination. 

Deconvolution in general for (Theorem~\ref{teo2}) is impossible, since the equation system (\ref{starthere2}) has twice as many unknowns as equations.
So, in this case, we need some prior knowledge about the phase distribution in order to perform deconvolution.

\subsection{The proofs of Theorem~\ref{teo3} and Theorem~\ref{teohankel}}
For Toeplitz matrices,~\cite{paper:brycdembojiang} shows that we can compute the moments in the same way as for Vandermonde matrices, 
but that we need only consider equations on the form (\ref{volumeeq}) with all blocks of $\rho$ of cardinality two.
The case of Hankel matrices is similar, however here the variables in (\ref{volumeeq}) 
are placed differently on the left and right sides\footnote{In the software described for this paper, Toeplitz, Hankel, and Vandermonde matrices all reuse the same code, 
but different sets of partitions are considered, depending on the type of the matrix. 
Also, the way the corresponding equation is constructed from the partition depends on the type of the matrix}.

\section{The proof of Theorem~\ref{teonew}} \label{appendixteonew}
Assume first that all aspect ratios are equal to $c$. 
In (25) in Theorem 7 in~\cite{ryandebbah:vandermonde1}, set ${\bf D}_i(N)=I_L$, and place the last matrix ${\bf V}_{i_1}$ in front instead to obtain
  \begin{equation} \label{thisgen}
  \begin{array}{l}
    \lim_{N\rightarrow\infty} \E[ \mathrm{tr} ( {\bf V}_{i_1} {\bf V}_{i_1}^H {\bf V}_{i_2} {\bf V}_{i_2}^H \times \cdots \times {\bf V}_{i_n} {\bf V}_{i_n}^H  ) ] \\
    =
    \sum_{\rho\leq\sigma\in{\cal P}(n)} K_{\rho , \omega} c^{|\rho |}
  \end{array}
  \end{equation}
(note that $c^{|\rho |}$ appears instead of $c^{|\rho |-1}$ since ${\bf V}_{i_1}$ is moved to front, and thus the additional $c$-factor is due to the fact that 
we take the trace of a matrix with different dimensions). 
As in Appendix~\ref{deconvproof} it is straightforward to generalize this to the case where the independent Vandermonde matrices have different aspect ratios, 
i.e. (\ref{thisgen}) is
\[
  \sum_{\rho\leq\sigma\in{\cal P}(n)} K_{\rho , \omega} \prod_{i=1}^{|\sigma|}c_i^{|\rho\cap\sigma_i|},
\]
where $\rho\cap\sigma_i$ is the partition consisting of the blocks of $\rho$ contained in $\sigma_i$.
Using Theorem 8 in~\cite{ryandebbah:vandermonde1} (i.e. we also assume that the phase distributions are different, with ${\bf V}_i$ having phase distribution $\omega_i$), we thus generalize (25) to
\begin{eqnarray*}
  & & \sum_{\sigma\geq\rho} K_{\rho , u} (2\pi)^{|\rho |-1} \int_0^{2\pi} \prod_{i=1}^{s}  p_{\omega_i}(x)^{|\rho\cap\sigma_i|} dx \prod_{i=1}^{|\sigma|}c_i^{|\rho\cap\sigma_i|} \\
  &=& \sum_{\sigma\geq\rho} K_{\rho , u} (2\pi)^{|\rho |-1} \int_0^{2\pi} \prod_{i=1}^{s}  (c_i p_{\omega_i}(x))^{|\rho\cap\sigma_i|} dx \\
  &=& K_{\rho , u} (2\pi)^{|\rho |-1} \int_0^{2\pi} \prod_{i=1}^{s}  (c_1 p_{\omega_1}(x) + c_2 p_{\omega_1}(x))^{|\rho|} dx \\
  &=& (c_1+c_2)^{|\rho|} K_{\rho , u} (2\pi)^{|\rho |-1} \\
  & & \times \int_0^{2\pi} \prod_{i=1}^{s}  \left(\frac{1}{c_1+c_2}(c_1 p_{\omega_1}(x) + c_2 p_{\omega_1}(x))\right)^{|\rho|} dx \\
  &=& \lim_{N\rightarrow\infty} \E[ \mathrm{tr} ( {\bf V}_{\omega_1\ast_{c_1,c_2}\omega_2,c_1+c_2} {\bf V}_{\omega_1\ast_{c_1,c_2}\omega_2,c_1+c_2}^H )^n ],
\end{eqnarray*}
where we have used (\ref{thisgen}) on the density $\frac{1}{c_1+c_2}(c_1 p_{\omega_1}(x) + c_2 p_{\omega_1}(x))$.

\section{The proof of Theorem~\ref{almostsuretheorem}} \label{appendixteoas}
We will first concentrate on the proof for almost sure convergence for a single Vandermonde matrix, as this case is the simplest. 
This proof will follow the same lines as that of almost sure convergence in~\cite{paper:brycdembojiang}, 
in that one uses Chebyshev's inequality, the Borel Cantelli lemma, and the following result:
\begin{lemma} \label{firstlemma}
Assume that ${\bf V}$ is an ensemble of random Vandermonde matrices with a continuous phase distribution, such that $\frac{L}{N}\rightarrow c$.
For any $r\geq 1$ there exists a constant $C_r$ such that, for all $L$, 
\begin{equation} \label{provethis}
  \E\left[ \left( \mathrm{tr}\left( \left( {\bf V}^H{\bf V} \right)^r \right)  - \E\left[ \mathrm{tr}\left( \left( {\bf V}^H{\bf V} \right)^r \right) \right] \right)^4 \right] \leq C_rL^{-3}.
\end{equation}
\end{lemma}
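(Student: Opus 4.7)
The plan is to expand the fourth centered moment as an explicit combinatorial sum, then exploit the centering together with independence of the $\omega_l$'s to reduce to ``connected'' configurations, and finally control each surviving configuration via the rank/Fourier--Motzkin counting developed in Appendix~\ref{deconvproof}. The structure mirrors the proof of almost-sure convergence for Toeplitz ensembles in~\cite{paper:brycdembojiang}, with the Vandermonde moment expansion replacing the Gaussian one.

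First I would write
\[
  T := \mathrm{tr}\bigl(({\bf V}^H{\bf V})^r\bigr) = \frac{1}{LN^r}\sum_{\mathbf{l}\in [L]^r,\ \mathbf{k}\in\{0,\dots,N-1\}^r} X_{\mathbf{l},\mathbf{k}}(\omega),
\]
with $X_{\mathbf{l},\mathbf{k}}(\omega) = \exp\bigl(j\sum_{t=1}^{r}k_t(\omega_{l_t}-\omega_{l_{t+1}})\bigr)$ and cyclic convention $l_{r+1}=l_1$, and then set $Y_{\mathbf{l},\mathbf{k}} := X_{\mathbf{l},\mathbf{k}} - \E[X_{\mathbf{l},\mathbf{k}}]$. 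Expanding (\ref{provethis}) produces a quadruple sum over $(\mathbf{l}^{(s)}, \mathbf{k}^{(s)})$, $s=1,2,3,4$, of terms $\E\bigl[\prod_{s=1}^{4} Y_{\mathbf{l}^{(s)},\mathbf{k}^{(s)}}\bigr]$. The key structural remark is that each factor $Y_{\mathbf{l}^{(s)},\mathbf{k}^{(s)}}$ depends on $\omega$ only through the phases $\{\omega_{l}\, :\, l\in \{l^{(s)}_1,\dots,l^{(s)}_r\}\}$. I would then group the quadruples by a super-partition $\tilde\pi$ on the $4r$ column slots induced by equality of $l^{(s)}_t$, and consider the auxiliary graph on the four copies $\{1,2,3,4\}$ whose edges join $s$ to $s'$ whenever some block of $\tilde\pi$ meets both copies.

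By independence of the $\omega_l$'s together with $\E[Y]=0$, the contribution of any quadruple vanishes whenever this copy-graph is disconnected, since then at least one factor is an independent mean-zero variable. Hence only connected super-partitions $\tilde\pi$ contribute, which forces at least three cross-copy identifications and therefore $|\tilde\pi|\leq 4r-3$. For each surviving $\tilde\pi$ the analysis of Appendix~\ref{deconvproof} applies verbatim: the $\mathbf{l}$-count realising $\tilde\pi$ is $O(L^{|\tilde\pi|})$, the integrand in the phases is uniformly bounded because $p_\omega$ is continuous on the compact interval $[0,2\pi)$, and the $\mathbf{k}$-sum is evaluated by the geometric-series identity (\ref{fudef}) and bounded through the Fourier--Motzkin rank count. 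Combining all factors with the normalisation $L^{-4}N^{-4r}$ and $N\asymp L$, the naive uncentered expansion is $O(1)$, while the connectedness restriction $|\tilde\pi|\leq 4r-3$ loses exactly three powers of $L$ relative to the uncentered maximum, yielding the desired $O(L^{-3})$. Summing over the (finitely many, $r$-dependent) connected partition types gives (\ref{provethis}).

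The main obstacle is the exponent bookkeeping in the last step: one has to verify that the three forced cross-copy mergers are not silently compensated by an increase in the rank deficit of the $\mathbf{k}$-equation system, that is, that the polytope volume $K_{\rho(\tilde\pi),\omega}$ governing the $\mathbf{k}$-sum cannot return the three free parameters that connectedness has just removed. This amounts to a four-copy analogue of the single-moment analysis of $K_{\rho(\pi),\omega}$ carried out in Appendix~\ref{deconvproof}, and is also where the continuity of $p_\omega$ is used nontrivially to convert the polytope-type integrals into uniform $O(1)$ bounds along the lines of Lemma~\ref{appendixlemma}. Once this bookkeeping is settled, the lemma feeds directly into Chebyshev's inequality and Borel--Cantelli, as anticipated in the text, promoting convergence in expectation to the almost-sure convergence asserted in Theorem~\ref{almostsuretheorem}.
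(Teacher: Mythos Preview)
Your direct-centering route is natural, but the connectivity step has a gap. You assert that a disconnected copy-graph on $\{1,2,3,4\}$ forces $\E\bigl[\prod_s Y_{\mathbf{l}^{(s)},\mathbf{k}^{(s)}}\bigr]=0$ because ``at least one factor is an independent mean-zero variable''. That holds only when some vertex is \emph{isolated}; it fails for the three perfect-matching configurations (components $\{1,2\}$ and $\{3,4\}$, and the two other pairings), where independence only yields $\E[Y_1Y_2Y_3Y_4]=\E[Y_1Y_2]\,\E[Y_3Y_4]$, which is generically nonzero. Such pair-of-pairs super-partitions force only two cross-copy mergers, so $|\tilde\pi|\leq 4r-2$, not $4r-3$, and your exponent count gives at best $O(L^{-2})$ for them. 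The ``main obstacle'' you flag at the end is in fact this very phenomenon in disguise: for pair-of-pairs configurations the $\mathbf{k}$-system decouples into two independent cyclic pieces with total rank $|\tilde\pi|-2$, so an extra free $\mathbf{k}$-parameter genuinely appears and does \emph{not} compensate the missing merger. This case has to be handled separately (for instance by recognising it as $3(\mathrm{Var}\,T)^2$ and bounding the variance by the two-copy version of the same argument).

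The paper does not center term by term. It expands $(T-\E T)^4$ via $\E[T^4]-4\E[T]\E[T^3]+6(\E T)^2\E[T^2]-3(\E T)^4$, classifies $\pi\in\mathcal{P}(4r)$ into classes $\mathcal{P}_0,\mathcal{P}_{i,j},\mathcal{P}_{i,j,k},\mathcal{P}_{1,2,3,4}$ according to which of the four length-$r$ intervals are linked, and uses the $1,-4,6,-3$ coefficients together with symmetry to show that every class except $\mathcal{P}_{1,2,3,4}$ cancels; it then invokes rank $|\pi|-1$ on the residual class to reach $O(L^{-3})$. The pair-of-pairs partitions again sit in $\mathcal{P}_{1,2,3,4}$ as defined (they belong to none of the smaller classes) and have rank $|\pi|-2$, so the same delicate point recurs there. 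In either approach the honest bound one obtains is $C_rL^{-2}$, which is still summable and therefore fully adequate for Chebyshev plus Borel--Cantelli, and hence for Theorem~\ref{almostsuretheorem}.
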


Comparing with~\cite{paper:brycdembojiang,book:hiaipetz}, Lemma~\ref{firstlemma} suggests that Vandermonde matrices 
converge somewhat faster than Hankel- and Toeplitz matrices, 
but somewhat slower than Gaussian matrices. 

\begin{proof}
We can write
\begin{eqnarray} 
\lefteqn{\E\left[ \left( \mathrm{tr}\left( \left( {\bf V}^H{\bf V} \right)^r \right)  - \E\left[ \mathrm{tr}\left( \left( {\bf V}^H{\bf V} \right)^r \right) \right] \right)^4 \right]} \nonumber \\
&=&     \E\left[ \left( \mathrm{tr}\left( \left( {\bf V}^H{\bf V} \right)^r \right) \right)^4 \right] \nonumber \\
& & - 4 \E\left[ \mathrm{tr}\left( \left( {\bf V}^H{\bf V} \right)^r \right) \right]\E\left[ \left( \mathrm{tr}\left( \left( {\bf V}^H{\bf V} \right)^r \right) \right)^3 \right] \nonumber \\
& & + 6 \left( \E\left[ \mathrm{tr}\left( \left( {\bf V}^H{\bf V} \right)^r \right) \right]\right)^2 \E\left[ \left( \mathrm{tr}\left( \left( {\bf V}^H{\bf V} \right)^r \right) \right)^2 \right] \nonumber \\
& & - 3 \left( \E\left[ \mathrm{tr}\left( \left( {\bf V}^H{\bf V} \right)^r \right) \right] \right)^4. \label{neweq3} 
\end{eqnarray}
We use certain interval partitions to define the following classes of partitions in ${\cal P}(4r)$:
\begin{itemize}
  \item ${\cal P}_0$: partitions $\pi$ such that
    \[\pi\leq\{[1,r],[r+1,2r],[2r+1,3r],[3r+1,4r]\},\]
  \item ${\cal P}_{1,2}$: partitions $\pi\not\in{\cal P}_0$ such that
    \[\pi\leq\{[1,2r],[2r+1,3r],[3r+1,4r]\},\] 
  \item ${\cal P}_{2,3}$: partitions $\pi\not\in{\cal P}_0$ such that
    \[\pi\leq\{[1,r],[r+1,3r],[3r+1,4r]\}\]
    (all other ${\cal P}_{i,j}$ are defined similarly),
  \item ${\cal P}_{1,2,3}$: partitions 
    $\pi\not\in {\cal P}_0\cup{\cal P}_{1,2}\cup{\cal P}_{1,3}\cup{\cal P}_{1,4}\cup{\cal P}_{2,3}\cup{\cal P}_{2,4}\cup{\cal P}_{3,4}$
    such that 
    \[\pi\leq\{[1,3r],[3r+1,4r]\}\] 
    (all other ${\cal P}_{i,j,k}$ are defined similarly),
  \item ${\cal P}_{1,2,3,4}$: partitions which are in none of the sets ${\cal P}_0,{\cal P}_{i,j},{\cal P}_{i,j,k}$.
\end{itemize}
These classes of partitions are indexed by which intervals in $\{[1,r],[r+1,2r],[2r+1,3r],[3r+1,4r]\}$ 
are joined (in the sense that at least one block in a partition $\pi$ in ${\cal P}_{1,2}$ 
should contain elements from both the first and second interval in $\{[1,r],[r+1,2r],[2r+1,3r],[3r+1,4r]\}$), 
and we can write ${\cal P}(4r)$ as a disjoint union:
\begin{eqnarray}
  {\cal P}(4r) &=& {\cal P}_0 \cup {\cal P}_{1,2} \cup {\cal P}_{1,3} \cup {\cal P}_{1,4} \cup {\cal P}_{2,3} \cup {\cal P}_{2,4} \cup {\cal P}_{3,4} \nonumber \\
               & & \cup {\cal P}_{1,2,3} \cup {\cal P}_{1,2,4} \cup {\cal P}_{1,3,4} \cup {\cal P}_{2,3,4} \nonumber \\
               & & \cup {\cal P}_{1,2,3,4}. \label{disjointpart}
\end{eqnarray}
We will denote the set of sets on the right hand side in (\ref{disjointpart}) by ${\cal S}$. 
Write 
\begin{eqnarray}
  S_{T,\pi}  &=& \sum_{ \stackrel{(j_1,...,j_{4r})}{\pi(j_1,...,j_{4r})=\pi}} \sum_{(i_1,...,i_{4r})} N^{-4r} L^{-4} \nonumber \\
             & & \hspace{1cm} \times \E_T \left( \prod_{k=1}^n \left( e^{j (\omega_{b(k-1)} -\omega_{b(k)}) i_k} \right) \right), \label{boundthis}
\end{eqnarray}
where 
\begin{enumerate}
  \item $\pi=\pi(j_1,...,j_{4r})$ is defined as in Appendix~\ref{deconvproof}, 
  \item $T$ is a subset of $\{1,2,3,4\}$,
  \item $\E_T(x_1\cdots x_4)=\E(\prod_{i\in T} x_i)\prod_{i\in T^c}\E(x_i)$ (i.e. $T$ dictates which random variables $x_i$ are grouped within the same expectation),
  \item $b(k)$ means the block of $\pi$ which $k$ belongs to (with $\omega_{W_1},...,\omega_{W_s}$ independent when $W_1,...W_s$ are the blocks of $\pi$).
  \item $k-1$ is formed modulo $\{[1,r],[r+1,2r],[2r+1,3r],[3r+1,4r]\}$, 
    meaning that the values of $k\rightarrow k-1$ actually takes the form
    \begin{eqnarray*}
      1    &\rightarrow& r \\
      r+1  &\rightarrow& 2r \\
      2r+1 &\rightarrow& 3r \\
      3r+1 &\rightarrow& 4r \\
      k    &\rightarrow& k-1 \mbox{, } k\not\in\{ 1,r+1,2r+1,3r+1\},
    \end{eqnarray*}
  \item $N^{-4r}$ are all normalizing factors in (\ref{vandermonde}), $L^{-4}$ are the normalizing factors which come from taking the four traces 
    for each term in (\ref{neweq3}).
\end{enumerate}
When we write out (\ref{neweq3}) (by writing out the matrix product as in Appendix~\ref{deconvproof}, 
we end up with sums of the form $\sum_{\pi} S_{T,\pi}$, with various values for $T$. 
We have in particular
\begin{eqnarray*}
  \sum_{\pi\in{\cal P}(4n)} S_{\{1,2,3,4\},\pi} &=& \E\left[ \left( \mathrm{tr}\left( \left( {\bf V}^H{\bf V} \right)^r \right) \right)^4 \right] \\
  \sum_{\pi\in{\cal P}(4n)} S_{\{\},\pi}        &=& \left( \E\left[ \mathrm{tr}\left( \left( {\bf V}^H{\bf V} \right)^r \right) \right] \right)^4.
\end{eqnarray*}
However, since only one Vandermonde matrix appears here, the analysis from Appendix~\ref{deconvproof} simplifies to the case $\sigma_1=\sigma=1_{4r}$, 
for which the quantities can be expressed directly in terms of $\pi\in{\cal P}(4r)$ rather than $\rho(\pi)\in{\cal  P}(8r)$ (as in Appendix~\ref{deconvproof}), 
so that the notation from~\cite{ryandebbah:vandermonde1}) can be followed more closely. 
As with (\ref{limitmoment3}), (\ref{neweq3}) thus becomes
\begin{eqnarray}
  & & \sum_{\pi\in{\cal P}(4r)} \left( S_{\{ 1,2,3,4\},\pi} - 4 S_{\{ 2,3,4\},\pi} + 6 S_{\{ 3,4\},\pi} - 3S_{\{\},\pi} \right) \nonumber \\
  &=& \sum_{S\in{\cal S}}\sum_{\pi\in S} (S_{\{ 1,2,3,4\},\pi} - 4S_{\{ 2,3,4\},\pi} \nonumber \\
  & & \hspace{1cm} + 6S_{\{ 3,4\},\pi} - 3S_{\{\},\pi}), \label{sumthese}
\end{eqnarray}
due to the ordering of the expectations in (\ref{neweq3}). 
We now consider all possibilities for $S\in{\cal S}$ in (\ref{sumthese}). 
For $\pi\in{\cal P}_0$ it is clear that one can split the expectations further to obtain 
\[
  S_{\{ 1,2,3,4\},\pi} = S_{\{ 2,3,4\},\pi} = S_{\{ 3,4\},\pi} = S_{\{\},\pi},
\]
and by adding up we see that the contribution from $S\in{\cal P}_0$ in (\ref{sumthese}) is $0$. 
Similarly, by splitting up the expectations as much as possible, the contributions for other $\pi$ in (\ref{sumthese}) is seen to be
\begin{eqnarray*}
  \pi\in{\cal P}_{1,2}     &:& S_{\{ 1,2\},\pi} - 4 S_{\{\},\pi} + 6 S_{\{\},\pi} - 3S_{\{\},\pi} \\
  \pi\in{\cal P}_{1,3}     &:& S_{\{ 1,3\},\pi} - 4 S_{\{\},\pi} + 6 S_{\{\},\pi} - 3S_{\{\},\pi} \\
  \pi\in{\cal P}_{1,4}     &:& S_{\{ 1,4\},\pi} - 4 S_{\{\},\pi} + 6 S_{\{\},\pi} - 3S_{\{\},\pi} \\
  \pi\in{\cal P}_{2,3}     &:& S_{\{ 2,3\},\pi} - 4 S_{\{ 2,3\},\pi} + 6 S_{\{\},\pi} - 3S_{\{\},\pi} \\
  \pi\in{\cal P}_{2,4}     &:& S_{\{ 2,4\},\pi} - 4 S_{\{ 2,4\},\pi} + 6 S_{\{\},\pi} - 3S_{\{\},\pi} \\
  \pi\in{\cal P}_{3,4}     &:& S_{\{ 3,4\},\pi} - 4 S_{\{ 3,4\},\pi} + 6 S_{\{3,4\},\pi} - 3S_{\{\},\pi} \\
  \pi\in{\cal P}_{1,2,3}   &:& S_{\{ 1,2,3\},\pi} - 4 S_{\{ 2,3\},\pi} + 6 S_{\{\},\pi} - 3S_{\{\},\pi} \\
  \pi\in{\cal P}_{1,2,4}   &:& S_{\{ 1,2,4\},\pi} - 4 S_{\{ 2,4\},\pi} + 6 S_{\{\},\pi} - 3S_{\{\},\pi} \\
  \pi\in{\cal P}_{1,3,4}   &:& S_{\{ 1,3,4\},\pi} - 4 S_{\{ 3,4\},\pi} + 6 S_{\{ 3,4\},\pi} - 3S_{\{\},\pi} \\
  \pi\in{\cal P}_{2,3,4}   &:& S_{\{ 2,3,4\},\pi} - 4 S_{\{ 2,3,4\},\pi} + 6 S_{\{ 3,4\},\pi} - 3S_{\{\},\pi} \\
  \pi\in{\cal P}_{1,2,3,4} &:& S_{\{ 1,2,3,4\},\pi} - 4 S_{\{ 2,3,4\},\pi} + 6 S_{\{ 3,4\},\pi} - 3S_{\{\},\pi}.
\end{eqnarray*}
Adding everything here, 
and using that the contributions from ${\cal P}_{i,j,k}$ all are equal for different $i,j,k$, 
and that the contributions from ${\cal P}_{i,j}$ all are equal for different $i,j$
(which is obvious by associating each interval $[kr+1,(k+1)r]$ with $r$ values on a circle, noting that 
the different classes of partitions can be viewed as different ways of connecting the circles, 
and that the actual circles being joined does not matter for the final value), we obtain that (\ref{neweq3}) equals
\[
  \sum_{\pi\in{\cal P}_{1,2,3,4}} \left( S_{\{ 1,2,3,4\},\pi} - 4 S_{\{ 2,3,4\},\pi} + 6 S_{\{ 3,4\},\pi} - 3S_{\{\},\pi} \right),
\]
i.e. we need only sum over $\pi\in{\cal P}_{1,2,3,4}$ (all other terms cancel).
If the phase distribution is uniform, we consider the coefficient matrix for the equation system corresponding to $\pi\in{\cal P}_{1,2,3,4}$ 
(formed as in Appendix~\ref{deconvproof}). 
This has rank $|\pi|-1$, so that the number of solutions $(i_1,...,i_{4r})$
solving the equation system has order $N^{4r-|\pi|+1}$. 
Since the number of $j_1,...,j_{4r}$ such that $\pi(j_1,...,j_{4r})=\pi$ is of order $O\left(L^{|\pi|}\right)$, (\ref{boundthis}) is 
\begin{equation} \label{orderestimate}
  O\left(N^{-4r} L^{-4}L^{|\pi|}N^{4r-|\pi|+1}\right) = O(L^{-3}).
\end{equation}
This proves the claim for the uniform distribution. 
When the Vandermonde matrices do not have uniform phase distribution, 
as long as the phase distribution is continuous, we can reduce to the case of uniform phase distribution using Lemma~\ref{appendixlemma} and the 
techniques in Appendix~\ref{deconvproof}. The constant $C_r$ needs only to be modified by taking into account the maximum of all 
Vandermonde mixed moment expansion coefficients of order $4r$. 
\end{proof}

To prove the general case, we must in (\ref{provethis}) replace ${\bf V}^H{\bf V}$ with the combination appearing in (\ref{moregeneral2}). 
One in this case instead considers the interval partition 
\[ \{[1,nr],[nr+1,2nr],[2nr+1,3nr],[3nr+1,4nr]\} \]
instead of the interval partition $\{[1,r],[r+1,2r],[2r+1,3r],[3r+1,4r]\}$. 
The sets of partitions ${\cal P}_0,{\cal P}_{1,2},...$ are defined similarly, and they are now sets in ${\cal P}(4rn)$.
For a mixed moment as in (\ref{neweq3}) (with ${\bf V}^H{\bf V}$ replaced with combinations as in(\ref{moregeneral2})), 
one shows as before that only partitions in ${\cal P}_{1,2,3,4}$ contribute (i.e. all other terms cancel as above). 
Since the form (\ref{moregeneral2}) is used, one needs to construct the partition $\rho(\pi)\in{\cal P}(8rn)$ from $\pi\in{\cal P}(4rn)$, 
and as in Appendix~\ref{deconvproof}, only partitions satisfying (\ref{equalityoccurs}) contribute (i.e. $\pi\in{\cal B}(4rn)$), and (\ref{orderestimate}) 
becomes in this case 
\begin{eqnarray*}
  \lefteqn{ O\left(N^{-4rn} L^{-4}L^{|\pi|}L^{4rn-|\rho(\pi)|+r(\pi)}\right) } \\
  &=& O\left(N^{-4rn} L^{-4}L^{|\pi|}L^{4rn + 1 -|\pi|}\right) = O(L^{-3}),
\end{eqnarray*}
and the result follows.

\bibliography{../bib/mybib,../bib/mainbib}

\end{document}